\newcommand{\numberset}{\mathbb}							
\newcommand{\R}{\numberset{R}}
\tikzstyle{block} = [draw, rectangle, minimum height=2em, minimum
\tikzstyle{sum} = [draw, fill=blue!20, circle, node
\tikzstyle{input} = [coordinate] \tikzstyle{output} =
\tikzstyle{pinstyle} = [pin edge={to-,thin,black}]
\tikzstyle{every
	node}=[font=\small] \tikzstyle{every path}=[line width=0.8pt,line
\newtheorem{theorem}{Theorem}
\newtheorem{lemma}{Lemma}
\newtheorem{proposition}{Proposition}
\newtheorem{definition}{Definition}
\newtheorem{assumption}{Assumption}
\newtheorem{remark}{Remark}
\DeclareMathOperator{\diag}{diag}
\DeclareMathOperator{\blockdiag}{blockdiag}
\renewcommand{\vec}{\bm}
\DeclareMathAlphabet{\mathpzc}{OT1}{pzc}{m}{it}
\DeclareMathAlphabet{\mathcal}{OMS}{cmsy}{m}{n}
\DeclareMathOperator{\col}{col}
\newtheorem{objective}{Objective}
\title{\LARGE \bf
Distributed Optimal Load Frequency Control with Stochastic  Wind Power Generation}
\author{Amirreza~Silani$^{1,2}$, Michele~Cucuzzella$^{1}$, Jacquelien~M.~A.~Scherpen$^{1}$ and Mohammad~Javad~Yazdanpanah$^{2}$ 
\thanks{*This work is supported by the EU Project MatchIT (project number: 82203).}
\thanks{$^{1}$A.~Silani, M.~Cucuzzella, and J.~M.~A.~Scherpen  are with the Jan~C.~Willems Center for Systems and Control, ENTEG, Faculty of
	Science and Engineering, University of Groningen, Nijenborgh 4, 9747
	AG Groningen, The Netherlands
	{\tt\small \{a.silani, m.cucuzzella, j.m.a.scherpen\}@rug.nl}.}%
\thanks{$^{2}$A.~Silani and M.~J.~Yazdanpanah are with the Control \& Intelligent Processing Center of Excellence, School of
	Electrical and Computer Engineering, University of Tehran, Tehran,
	Iran 
	{\tt\small \{a.silani, yazdan\}@ut.ac.ir}.}%
}
\begin{document}

\maketitle
\thispagestyle{empty}
\pagestyle{empty}

\begin{abstract}
Motivated by the inadequacy of conventional control methods for power networks with a large share of renewable generation, in this paper we study the (stochastic) passivity property of wind turbines based on the Doubly Fed Induction Generator (DFIG).
  Differently from the majority of the results in the literature, where renewable generation is ignored or assumed to be constant, we model wind power generation as a stochastic process, where
 wind speed is described by a class of stochastic differential equations. Then, we design a distributed control scheme that achieves load frequency control and economic dispatch, ensuring the stochastic stability of the controlled network.
\end{abstract}

\section{INTRODUCTION}

The supply-demand balance is an essential control objective in power networks. Indeed, the supply-demand mismatch  leads to frequency deviations from the nominal value, which eventually may result in stability disruptions \cite{ref1,ref2}. For this reason, the main control objective in power networks is the so-called Load Frequency Control (LFC).
 Additionally, another key objective is the minimization of the generation costs, also known as economic dispatch \cite{ref47}. The economic dispatch together with the LFC is called in the literature \emph{Optimal}
LFC (OLFC) (see for instance \cite{ref47,ref20,cai,ref49} and the references therein). However, due to the growing 
share of renewable generation sources in power networks, the conventional control schemes may be not adequate \cite{ref3}. 

Different control strategies achieving LFC and OLFC have been proposed for instance in \cite{ref39,ref40,ferrara} and \cite{ref47,ref49,ref26,ref46,ref48,ref50}, respectively (see also the references therein). However, in all these works, only conventional power generation is taken into account.

\subsection{Motivation and Contributions}
Nowadays, renewable generation sources are widespread in power networks, leading to an inevitable increase of uncertainties affecting the overall power system and its stability, resilience and reliability. For this reason, advanced control methods that guarantee the stability of the power system also in presence of time-varying renewable sources are necessary. Indeed, due to the random
and unpredictable nature of some primary energy sources such as wind, the dynamic behaviour of renewables can be usually described by stochastic processes (\emph{e.g.} Ito
calculus), as shown for instance in \cite{verdejo,muhando} for wind power generation. Also, \cite{ref51} proposes wind speed models based on Stochastic Differential Equations (SDEs), which can be useful in wind turbine models. Differently from \cite{ref39,ref40,ferrara,ref47,ref49,ref26,ref46,ref48,ref50} and other relevant works on the topic, in this paper we couple the wind speed model introduced in \cite{verdejo} with the model of wind turbines based on the Doubly Fed Induction Generator (DFIG). Then, we present a distributed passivity-based control scheme achieving OLFC and ensuring the stochastic stability of the power network.

The main contributions of this paper can be summarized as follows: (i)  the OLFC problem for nonlinear power networks including the turbine-governor model of conventional generators and the model of DFIG-based wind turbines is formulated, where the wind speed is modeled by an SDE; (ii) sufficient conditions for the stochastic passivity of the open-loop system are presented, facilitating the interconnection with passive control systems; (iii) a control scheme is proposed to obtain the passivity property of the DFIG-based wind turbine; (iv) the stochastic stability of the power network controlled by the distributed control scheme proposed in \cite{ref47} is proved and OLFC objective is achieved. 


\subsection{Notation}
The set of real numbers is denoted by $\R$. The set of positive (nonnegative) real numbers is denoted by $\R_{>0}$ ($\R_{\geq0}$).
Let $\boldsymbol{0}$ denote the vector of all zeros and the null matrix of suitable dimension(s), and $\boldsymbol{1}_n\in\R^n$ denote the vector containing all ones. The $n \times n$ identity matrix is denoted by $\mathds{I}_n$. Let $A \in \R^{n \times n}$ be a matrix. In case $A$ is a positive definite (positive semi-definite) matrix, we write $A > \vec{0}$  ($A \geq \vec{0}$). 
Let $\vert A\vert$ denote the matrix $A$ with all elements positive. 
The $i$-th element of vector $x$ is denoted by $x_i$. A steady-state solution to system $\dot x = f(x)$, is denoted by $\overline x$,  \textit{i.e.}, $\boldsymbol{0} = f(\overline x)$.
 Let $x\in\R^n,y\in\R^m$ be vectors,
  then we define $\col(x,y):=(x^\top~y^\top)^\top\in\R^{n+m}$.
Given a vector $x\in \R^n$, $[x]\in \R^{n\times n}$ indicates the diagonal matrix whose diagonal entries are the components of $x$ and $\sin(x):=\col\big(\sin(x_1),\dots,\sin(x_n)\big)$.  

\section{Problem Formulation}\label{sec:2}
In this section, we introduce the nonlinear power system model together with the turbine-governor and wind turbine models. Then, two control objectives are presented: load frequency control and   \emph{optimal} generation (economic dispatch). 
\subsection{Power Network Model}\label{power network}
In this subsection, we discuss the model of the considered power network (see Table~\ref{tab1} for the description of the symbols used throughout the paper). 
The  network topology is represented by an undirected and connected graph $\mathcal{G}=(\mathcal{V},\mathcal{E})$, where $\mathcal{V}=\mathcal{V}_{c}\cup \mathcal{V}_{w}=\{1, 2, ..., n\}$ is the set of the control areas and $\mathcal{E}=\{1, 2, ..., m\}$ is the set of the transmission lines. Specifically, the network comprises  $n_c$ conventional (synchronous) generators and $n_w$ wind turbine generators. Then, $\mathcal{V}_{c}=\{1, 2, ..., n_c\}$ is the set of the control areas including conventional (synchronous) generators and $\mathcal{V}_{w}=\{n_c+1, 2, ..., n\}$, with  $n=n_c+n_{w}$, is the set of the control areas including wind turbine generators.
Moreover, in analogy with \cite{ref16,ref17}, we assume that the power network is lossless and each node represents an aggregated area of generators and loads. Let $\mathcal{A} \in \R^{n \times m}$ denote the incidence matrix corresponding to the network topology. Then, the dynamics of the overall network (known as \emph{swing} dynamics) for all nodes (areas) $i \in \mathcal{V}$ are the following (see also \cite{ref16,ref17,ref47,ref49} for further details):
\begin{table}
	\centering
	\caption{Symbols}
	{\begin{tabular}{ll | ll}
			\toprule			
			
			$P_{c i}$		& Conventional power 	&$X_{m i}$      &  DFIG magnetizing\\
			~               &generation                  &~               & reactance\\
			$P_{w i}$		& Wind power generation &$X_{r i}$      &  DFIG rotor reactance\\  
			$P_{l i}$		& Unknown constant load &$X_{s i}$      &  DFIG Stator reactance\\
			$\varphi _i$	& Voltage angle &$X_{u i}$      &  Ratio between DFIG   \\
			$\omega_i$	    & Frequency deviation  &~ &magnetizing and \\
			$V_i$ 	    	& Voltage          &~  & stator self-inductance  \\
			$\iota_{{ds}_{i}}$ 	    	& $d$ component of DFIG  &$R_{r i}$      &  DFIG Rotor resistance \\
			~              &stator current &$R_{s i}$      &DFIG Stator resistance\\
			$\iota_{{qs}_{i}}$ 	    	& $q$ component of DFIG  &$\psi_i$        & Damping constant\\
			~                       &stator current  &$B$             & Susceptance\\
			$\iota_{{dr}_{i}}$ 	    	& $d$ component of DFIG  &$\bar{E}_{f i}$       & Exciter voltage\\
			~                       &rotor current  &$\tau_{c i}$    & Turbine time constant\\ 
			$\iota_{{qr}_{i}}$ 	    	& $q$ component of DFIG  &$H_{i}$         & Turbine inertia of \\
			~                       &rotor current &~  &wind turbine\\
			$V_{dr i}$ 	    	& $d$ component of DFIG &$T_{m i}$         & Mechanical torque of \\
			~                       & rotor voltage  &~  &wind turbine\\
			$V_{qr i}$ 	    	& $q$ component of  &$\lambda_i$         & Tip-speed ratio of \\			             
			~                       & DFIG rotor voltage   &~  &wind turbine\\
			$V_{t i}$    	    	& Terminal voltage &$r_i$          & Rotor radius of \\						
			~                       & of DFIG   &~   &wind turbine\\
			$f_{r i}$ 	    	& Rotor angular &$C_{Q i}(\lambda_i)$         & Power coefficient of   \\
			~                   & speed of DFIG       &~                  &wind turbine\\
			$f_{b i}$    	   	& Base speed   of DFIG                 &$\rho$         & Air density \\ 
			${v}_{i}$ 	    	& Predicted term of                     &$\xi_i$        &Speed regulation \\
			~                   &wind speed          &~          &coefficient\\
			$\tilde{v}_{i}$ 	   	& Stochastic term   &$\mathcal{N}_i$ & Neighboring areas \\
			~                       & of wind speed     &~                &of area $i$\\
			$\tau _{p i}$   & Moment of inertia           &$\mathcal{A}$        & Incidence matrix \\
			$\tau _{v i}$   & Direct axis transient               &~                     &of power network\\
			~               &open-circuit constant                &$L^{\mathrm{com}}$   & Laplacian matrix \\
			 $X_{d i}$       & Direct synchronous    &~                     &of communication \\
			 ~                & reactance           &$u_{ci}$	     	& Control input for \\
			$X'_{d i}$      & Direct synchronous    &~    &conventional generator\\
			~               &transient reactance  &$u_{wi}$	     	& Control input for  \\
			~                &~                   &~                &wind turbine\\

					
			\bottomrule
	\end{tabular}}
	\label{tab1}
	\vspace{-1em}
\end{table}

\begin{equation}\label{eq3}
	\begin{split}
		\dot{\theta}&=\mathcal{A}^\top \omega \\
		\tau _p \dot{\omega} &= -\psi \omega +
		P-P_l -\mathcal{A}\Upsilon(V) \sin(\theta)\\
		\tau _v \dot{V}&=-\chi _dE(\theta)V+\bar{E}_{f},\\
	\end{split}
\end{equation}
where $\omega, V : \R_{\geq 0}\rightarrow \R^{n}$, $P : \R_{\geq 0}\rightarrow \R^{n}$ is defined as $P:=\col(P_{c},P_{w})$, with $P_{c}: \R_{\geq 0}\rightarrow \R^{n_c}$, $P_{w}: \R_{\geq 0}\rightarrow \R^{n_w}$ denoting the vector of the power generated by conventional and wind turbine generators, respectively,  $\theta : \R_{\geq 0}\rightarrow \R^{m}$ denotes the vector of the voltage angles differences, $\chi_d\in \R^{n\times n}$ is a diagonal matrix whose diagonal elements are defined as $\chi _{d i}:=X _{d i}-X' _{d i}$, with $X _{d i}, X' _{d i}\in\R$,
$\tau _p, \tau _v, \psi, P_l \in \R^{n\times n}$, and $\bar{E}_f\in\R^n$. Moreover, $\Upsilon : \R^{n}\rightarrow \R^{m\times m}$ is defined as $\Upsilon(V):=\diag\{\Upsilon_1, \Upsilon_2, ..., \Upsilon_m\}$, with $\Upsilon_k:=V_iV_jB_{ij}$, where $k\sim\{i,j\}$ denotes the line connecting areas $i$ and $j$. Furthermore, for any $i,j\in \mathcal{V}$, the components of $E : \R^m\rightarrow \R^{n\times n}$ are defined as follows: 
\begin{align}
	\begin{split}
		E_{ii}(\theta)=&~\frac{1}{\chi_{d i}}-B_{ii}, \hspace{6.8em} i \in \mathcal{V}\\
		E_{ij}(\theta)=&-B_{ij}\cos(\theta_k)=E_{ji}(\theta), \hspace{1.2em} k \sim \{i,j\} \in \mathcal{E} \\
		E_{ij}(\theta)=&~0,  \hspace{10.35em}\text{otherwise}.
	\end{split}
\end{align}
\begin{remark}\textbf{(Susceptance and reactance).}
	According to \cite[Remark~1]{ref49}, we notice that the reactance $X_{di}$ of each generator $i\in\mathcal{V}$ is in practice generally larger than the corresponding transient reactance $X'_{di}$. Furthermore, the self-susceptance $B_{ii}$ is negative and satisfies $\vert B_{ii}\vert>\sum_{j\in \mathcal{N}_i} \vert B_{ij}\vert$. Therefore, $E(\theta)$ is a strictly diagonally dominant and
	symmetric matrix with positive elements on its diagonal, implying that $E(\theta)$
	is positive definite \cite{ref16}.
\end{remark}

\subsection{Turbine-Governor Model for Conventional (Synchronous) Generators}\label{turbine governor}
In this subsection, we introduce the dynamics of the turbine-governor typically coupled with conventional (synchronous) generators. 
Specifically, we express the power generated by the  (equivalent) synchronous generator $i\in \mathcal{V}_{c}$ as the output of a first-order dynamical system describing the behaviour of the turbine-governor, \textit{i.e.},
\begin{equation}\label{eq15}
	\begin{split}
		\tau_{c i}\dot{P}_{c i}=-P_{c i}-\xi_i^{-1}\omega_i +u_{ci}, 
	\end{split}
\end{equation}
where $u_{ci} : \R_{\geq 0}\rightarrow \R$ is the control input and $\tau_{c i},\xi_i\in\R_{>0}$. Now, we can write systems (\ref{eq15})  compactly for all nodes $i\in \mathcal{V}_{c}$ as 
\begin{equation}\label{governor}
	\tau_{c}\dot{P}_{c}=-P_{c}-\xi^{-1}\omega +u_{c},
\end{equation}
where  $u_{c} : \R_{\geq 0}\rightarrow \R^{n_c}$ and $\tau_{c},\xi\in \R^{n_c\times n_c}$. 

Now, as it is customary in the power systems literature (see for instance \cite{ref16,ref47,ref49}), we assign to the power generated by the synchronous generator $i\in \mathcal{V}_{c}$, the following strictly convex linear-quadratic cost function:  
\begin{equation}\label{eq11}
	\begin{split}
		J^c_{i}(P_{c i})=\dfrac{1}{2}q_i P_{c i}^2+z_iP_{c i}+c_i,
	\end{split}
\end{equation}
where $J^c_{i} : \R\rightarrow \R$, $q_i\in\mathbb{R}_{>0}$, $z_i\in\mathbb{R}$, and $c_i\in\mathbb{R}$ for all $i\in \mathcal{V}_{c}$.
\subsection{DFIG-Based Wind Turbine Generator Model}\label{wind turbine}
In this subsection, we introduce the Doubly Fed Induction Generator (DFIG) dynamics of a wind turbine generator. In the DFIG-based wind turbine generator, two back-to-back converters including a rotor side converter  and a grid side
converter are used. The rotor side converter controls the rotor current, while the grid side
converter controls the DC link voltage \cite{ref52,ref52.1}. Since wind speed affects the generated power of a wind turbine, it is then important to have a realistic model of the wind speed.  In our model, we consider that the wind speed at each node $i\in\mathcal{V}$ is given by the sum of a predicted constant component $v_i$ and a stochastic component $\tilde{v}_i$. For this reason, an appropriate mathematical framework such as the Ito calculus framework is adopted
to analyze the DFIG model with stochastic wind speed and to control the active power generated by the wind turbine. Before introducing the DFIG dynamics,  we recall for the readers' convenience the definition of stochastic differential equation through the Ito calculus framework \cite{ref25ch2,ref26.1ch2}.
\begin{definition}{\bf (Stochastic differential equation).} A stochastic differential equation (SDE) is defined as follows:
	\begin{equation}\label{eq17.9}
		dx(t)=f(x,u)dt+g(x)d\beta(t),
	\end{equation}
	where $f(x,u)\in\mathbb{R}^{N}$ and $g(x)\in\mathbb{R}^{N\times M}$ are locally Lipschitz, $x(t)\in\mathbb{R}^N$ is the state vector of the stochastic process, $u(t)\in\mathbb{R}^P$ is the input of the system and $\beta(t)\in \mathbb{R}^{M}$ is the standard Brownian motion vector. 
\end{definition}

Now, according to \cite{ref52,ref52.1}, the dynamics of the DFIG-based wind turbine generator $i\in\mathcal{V}_{w}$  are given by
\begin{equation}\label{eq19}
	\begin{split}
		\dot{\iota}_{ds i}=&~\dfrac{f_{b i}}{K_i}\Big(-R_{s i} X_{r i} \iota_{ds i}+(K_i +X_{m i}^2f_{r i})\iota_{qs i}+R_{r i}X_{m i}\\
		&~\iota_{dr i}+X_{m i}X_{r i}f_{r i}{ \iota_{qr i}}+X_{r i}V_{t i}-X_{m i}V_{dr i}\Big)\\
		:=&~h_{ds i}(x_i)+b_{{s}i}V_{dr i} \\
		\dot{\iota}_{{qs}{i}}=&~\dfrac{f_{b i}}{K_i}\Big(-(K_i +X_{m i}^2f_{r i})\iota_{ds i}-R_{s i}X_{r i} \iota_{qs i}\\
		&-X_{m i}X_{r i}f_{r i} \iota_{dr i}-X_{m i} V_{qr i}+R_{r i}X_{m i} \iota_{qr i}\Big)\\
		:=&~h_{qs{i}}(x_i)+{b_{{s}i}} V_{qr i} \\
		\dot{\iota}_{{dr}{i}}=&~\dfrac{f_{b i}}{K_i}\Big(R_{s i}X_{m i} \iota_{ds i}-X_{s i} X_{m i}f_{r i} \iota_{qs i}-R_{r i} X_{s i} \iota_{dr i}\\
		&+(K_i -X_{s i} X_{r i}f_{r i})\iota_{qr i}-X_{m i}V_{t i}+X_{s i}V_{dr i}\Big) \\ 
		:=&~h_{{dr}{i}}(x_i)+b_{r i} V_{dr i}\\ 
		\dot{\iota}_{qr i}=&~\dfrac{f_{b i}}{K_i}  \Big(X_{s i} X_{m i} f_{r i} \iota_{ds i}+R_{s i}X_{m i} \iota_{qs i}+(X_{s i}X_{r i}f_{r i} \\ 
		&- K_i)\iota_{dr i}-R_{r i}X_{s i} \iota_{qr i}+X_{s i} V_{qr i}\Big)\\ 
		:=&~h_{qr i}(x_i)+b_{r i} V_{qr i}\\ 
	\dot{f}_{r i}=&~\dfrac{1}{2 H_i}\Big(T_{m i}(\tilde{v}_i) - X_{m i} (\iota_{{ds}{i}} \iota_{{qr}{i}} - \iota_{{qs}{i}} \iota_{{dr}{i}})\Big)\\ 
	:=&~{h_{f r{i}}}(x_i) \\ 
	P_{w i}=&-X_{u i}\iota_{{qr}{i}}f_{r i}\\
	:=&~\zeta_i(x_i),
\end{split}
\end{equation}
where $\iota_{{ds}{i}},  \iota_{{qs}{i}}, \iota_{{dr}{i}}, \iota_{{qr}{i}}, V_{dr i}, V_{qr i}, f_{r i}, \tilde{v}_{i}, P_{w i} :\R_{\geq 0}\rightarrow\R$, $x_i : \R_{\geq 0}\rightarrow\R^6$ is the state vector of DFIG defined as $x_i:=\col(\iota_{{ds}{i}}, \iota_{{qs}{i}}, \iota_{{dr}{i}}, \iota_{{qr}{i}}, f_{r i}, \tilde{v}_{i})$,  and $b_{s i},b_{r i}\in\R$ are defined as $b_{s i}:=-\frac{f_{bi} }{K_i}X_{m i}$ and $b_{r i}:=\frac{f_{bi} }{K_i}X_{s i}$. Also, $h_{ds i},h_{qs{i}},h_{{dr}{i}},h_{qr i},h_{f r{i}},\zeta_i:\R^{6}\rightarrow\R$, $V_{t i}, f_{b i}, X_{m i}, X_{r i}, X_{s i}\in\R$, $R_{s i},R_{r i},H_{i}\in\R_{>0}$, and $K_i\in\R$ is defined as $K_i:=X_{s i}X_{r i}- X_{m i}^2$. Moreover, $T_{m i} : \R\rightarrow\R_{\geq 0}$ is defined as $T_{m i}(\tilde{v}_i):=\frac{1}{2}\rho\pi r_i^3C_{Q i}(\lambda_i)(v_{i}+\tilde{v}_{i})^2$ with $v_{i}\in\R$, $\lambda_i,\rho,r_i\in\R_{>0}$, $C_{Q i} : \R_{>0}\rightarrow\R_{>0}$. Now, let the stochastic term of wind speed $\tilde{v}_{i}$  be modeled by a SDE as in \cite{ref51}, \textit{i.e.},
\begin{equation}\label{eq20}
	d \tilde{v}_{i}=-\mu_{w i}\tilde{v}_{i}dt+\sigma_{w i}\tilde{v}_{i}d\beta,~\forall i\in\mathcal{V}_{w},
\end{equation}
where $\mu_{w i}$ and $\sigma_{w i}$ are positive constant parameters.
Then, we can rewrite (\ref{eq19}) and (\ref{eq20}) compactly for all nodes $i\in\mathcal{V}_{w}$ as
\begin{equation}\label{eq21}
	\begin{split}
		dx&=(H_{g}(x)+B_{u} u_{w})dt+G(x)d\beta(t) \\
		P_{w}&=\zeta(x),
	\end{split}
\end{equation}
where $x : \R_{\geq 0}\rightarrow\R^{6n_w}$ is defined as $x:=\col(x_1,\dots,x_{n_w})$, $u_{w} : \R_{\geq 0}\rightarrow\R^{2n_w}$ with $u_{wi} : \R_{\geq 0}\rightarrow\R^{2}$ defined as $u_{wi}:=\col(V_{dr i},V_{qr i})$, $\beta: \R_{\geq 0}\rightarrow\R^{6n_w}$ is the standard Brownian motion vector. Furthermore, $H_{g} : \R^{6n_w}\rightarrow\R^{6n_w}$ is defined as $H_g(x):=\col\big(H_{g 1},\dots, H_{g n_w}\big)$ with $H_{g i}(x_i):=\col\big(h_{ds i}(x_i),h_{qs i}(x_i), h_{dr i}(x_i), h_{{qr}{i}}(x_i),$ $h_{f r{i}}(x_i), -\mu_i\tilde{v}_{i}\big)$, $G : \R^{6n_w}\rightarrow\R^{6n_w\times 6n_w}$ is defined as $G(x):=\blockdiag\big(G_1,\dots,G_{n_w}\big)$ with $G_i(x):=\diag(0, 0, 0, 0, 0, \sigma_{wi}\tilde{v}_{i})$, $\zeta : \R^{6n_w}\rightarrow\R^{n_w}$ is defined as $\zeta(x):=\col(\zeta_1,\dots,\zeta_{n_w})$ 
and $B_u\in\R^{6n_w\times 2n_w}$ is defined as $B_u:=\blockdiag\big(B_{u 1},\dots,B_{u n_w}\big)$ with $B_{u i}:=\col\big(({b_{{s}i}}~~0), (0~~{b_{{s}i}}), ({b_{r i}}~~0), (0~~{b_{r i}}), \boldsymbol{0}_{2\times 2}\big)$.

Now, we assign to the power generated by the wind turbine $i\in\mathcal{V}_{w}$, the following strictly concave linear-quadratic utility function:  
\begin{equation}\label{eq12}
	J^{w}_i(P_{w i})=-\dfrac{1}{2}q_i P_{w i}^2+z_iP_{w i}+c_i,
\end{equation}
where $J^{w}_{i} : \R\rightarrow \R$, $q_i\in\mathbb{R}_{>0}$, $z_i\in\mathbb{R}$, and $c_i\in\mathbb{R}$ for all $i\in\mathcal{V}_{w}$. Note that $q_i$ and $z_i$ are selected in order to take into account the value of the maximum power that the wind turbine can generate given the predicted wind speed $v_i$.

\subsection{Control Objectives}

In this subsection, we introduce and discuss the main control objectives of this work.
The first objective concerns the asymptotic regulation of the  frequency deviation to zero, \textit{i.e.}, 
\begin{objective}\textbf{(Load Frequency Control).}\label{obj:1}
	\begin{equation}\label{obj1}
		\lim_{t\rightarrow \infty}\omega(t)=\boldsymbol{0}_n.
	\end{equation}
\end{objective}

Besides improving the stability of the power network by regulating the frequency deviation to zero, advanced control strategies additionally aim at reducing the costs associated with the power generated by the conventional synchronous generators and increasing the utilities associated with the power generated by the wind turbines.
Therefore, we introduce the following optimization problem: 
\begin{equation}\label{eq13}
	\begin{split}
		&\min_{P} ~J(P)\\ 
		&~\text{s.t.}~\sum_{i\in\mathcal{V}}{\bar{P}_{i}}-P_{l i}=0, 
	\end{split}
\end{equation}
where $J(P)=\sum_{i\in \mathcal{V}_{c}} J^c_{i}(P_{c i})-\sum_{i\in\mathcal{V}_{w}}J^{w}_i(P_{w i})=\frac{1}{2} P^\top QP+Z^\top P+\boldsymbol{1}_n^\top C$ with $J^c_{i}(P_{c i})$, $J^{w}_i(P_{w i})$ given by \eqref{eq11}, \eqref{eq12}, respectively, Also, $Q\in\R^{n\times n}$, $Z,C\in\R^n$  are defined as $Q:=\diag(q_1,\dots, q_{n_c},q_{n_c+1},\dots, q_{n_c+n_w})$, $Z:=\col(z_1,\dots, z_{n_c},-z_{n_c+1},\dots, -z_{n_c+n_w})$, $C:=\col(c_1,\dots,  c_{n_c}, -c_{n_c+1},\dots, -c_{n_c+n_w})$, respectively. 
In this regard, \cite[Lemma 2]{ref49}, \cite[Lemma~3]{ref16} show that it is possible to achieve zero steady-state frequency deviation and simultaneously minimize the objective function $J(P)$ in \eqref{eq13} when the load $P_l$ is constant. More precisely, when the load $P_l$ is constant, the \emph{optimal} value of $P$, which allows for zero steady-state frequency deviation and minimizes (at the steady-state) the objective function $J(P)$ in \eqref{eq13}, solving the optimization problem \eqref{eq13}, is given by:
\begin{equation}\label{optimal}
	P^\mathrm{opt}= Q^{-1}\Big(\frac{\boldsymbol{1}_n\boldsymbol{1}_n^\top (P_l+Q^{-1}Z)}{\boldsymbol{1}_n^\top Q^{-1}\boldsymbol{1}_n}-Z\Big),
\end{equation}  
where $P^\mathrm{opt}:=\col(P_c^\mathrm{opt},P_w^\mathrm{opt})$.
This leads to the second objective, \textit{i.e.}, minimization of the objective function $J(P)$ in \eqref{eq13}, which is also known in the literature as economic dispatch or \emph{optimal} generation \cite{ref16,ref49}. Then, the second goal concerning the economic dispatch or \emph{optimal} generation is defined as follows: 
\begin{objective}\textbf{(Economic dispatch).}\label{obj:2}
	\begin{equation}\label{obj2}
		\lim_{t\rightarrow \infty} P(t) = P^\mathrm{opt},
	\end{equation}	
\end{objective}
with $P^\mathrm{opt}$ given by \eqref{optimal}.


We assume now that there exists a (suitable) steady-state solution to the considered augmented power network model (\ref{eq3}), (\ref{governor}) and (\ref{eq21}).  
\begin{assumption}\textbf{(Steady-state solution).}\label{ass1}
	There exists a constant input $(\bar{u}_c,\bar{u}_w)$ and a steady-state solution $(\bar{\theta}, \bar{\omega}, \bar{V}, \bar{P}, \bar{x})$ to (\ref{eq3}), (\ref{governor}) and (\ref{eq21}) satisfying 	
	\begin{equation}\label{eq4}
		\begin{split}
			\boldsymbol{0}&=\mathcal{A}^\top\bar{\omega} \\
			\boldsymbol{0}&= -\psi\bar{\omega} +\bar{P}-P_l-\mathcal{A}\Upsilon(\bar{V}) \sin(\bar{\theta})\\
			\boldsymbol{0}&=-\chi_d E(\bar{\theta})\bar{V}+\bar{E}_{f}\\
			\boldsymbol{0}&=-\bar{P}_{c}-\xi^{-1}\bar{\omega}+\bar{u}_c\\
			\boldsymbol{0}&=\big(H_{g}(\bar{x})+B_{u} \bar{u}_w\big)dt+G(\bar{x})d\beta.
		\end{split}
	\end{equation}
	Additionally, \eqref{eq4} holds also when $\bar{\omega}=\boldsymbol{0}$ and $\bar{P} = P^\mathrm{opt}$, with $P^\mathrm{opt}$ given by \eqref{optimal}.
\end{assumption}

In the next section, we present the passivity properties for the power network, turbine-governor and wind turbine. Then, we design a control scheme for regulating the frequency in presence of \emph{stochastic} wind power generation. 
To this end, in analogy with \cite{ref49,ref16}, the following assumption is required:
\begin{assumption}\textbf{(Steady-state voltage angle and amplitude).}\label{ass2}
	The steady-state voltage $\bar{V}\in\R^n$ and angle difference $\bar{\theta}\in\R^m$ satisfy 
	\begin{equation}\label{eq5}
		\begin{split}
			&\bar{\theta} \in (-\frac{\pi}{2},\frac{\pi}{2})^m,\\
			\chi_{d}E(\bar{\theta})-\diag(\bar{V})^{-1}\vert \mathcal{A}&\vert\Upsilon(\bar{V})\diag(\sin(\bar{\theta})) \\
			\diag(\cos(\bar{\theta}))^{-1}&\diag(\sin(\bar{\theta}))\vert \mathcal{A}\vert^\top  \diag(\bar{V})^{-1}>0.
		\end{split}
	\end{equation}
\end{assumption}
Note that Assumption~\ref{ass2} is usually verified in practice, \textit{i.e.}, the differences in voltage
(angles) are small and the line reactances are greater than the
generator reactances \cite{ref49,ref16}.


\section{Optimal Load Frequency Control} \label{sec:3}
In this section, we present the passivity properties for the power network, turbine-governor and wind turbine. Then, we use such passivity properties for designing a controller achieving Objectives~\ref{obj:1} and \ref{obj:2}.

%
\subsection{Incremental Passivity of Power Network and Turbine-Governor}
In this subsection, we recall from the literature the incremental passivity of the power network model introduced in Subsection~\ref{power network} and the turbine-governor model introduced in Subsection~\ref{turbine governor}. 
In analogy with \cite[Lemma~2]{ref16}, \cite[Lemma~3]{ref47}, the incremental passivity of system  \eqref{eq3} is obtained via the following lemma. 
\begin{lemma}\textbf{(Incremental passivity of system  \eqref{eq3}).} \label{prop:1}
	Let Assumptions \ref{ass1}, \ref{ass2} hold. System (\ref{eq3}) is  incrementally passive  with respect to the storage
	function
	\begin{equation}\label{eq7}
		\begin{split}
			S_1 =&-\boldsymbol{1}_n^\top \Upsilon(V)\cos(\theta)+\boldsymbol{1}_n^\top \Upsilon(\bar{V})\cos(\bar{\theta}) + \dfrac{1}{2}V^\top  DV \\
			&-\big(\Upsilon(\bar{V})\sin(\bar{\theta})\big)^\top(\theta-\bar{\theta})-\bar{E}_{fd}(V-\bar{V}) \\
			&-\dfrac{1}{2}\bar{V}^\top  D\bar{V}+\dfrac{1}{2}(\omega-\bar{\omega})^\top  \tau_{p}(\omega -\bar{\omega}),
		\end{split}
	\end{equation}
	and supply rate $(\omega -\bar{\omega})^\top (P-\bar{P})$, where the steady-state solution $(\bar{\theta},\bar{V},\bar{\omega})$ satisfies  (\ref{eq4}) and $D$ is a diagonal matrix with $D_{ii}=\frac{1-B_{ii}(X_{di}-X'_{di})}{X_{di}-X'_{di}}$.
\end{lemma}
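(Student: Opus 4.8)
The plan is to read $S_1$ as a Bregman distance generated by a potential function in the variables $(\theta,V)$, augmented by the kinetic-type term $\tfrac{1}{2}(\omega-\bar{\omega})^\top\tau_p(\omega-\bar{\omega})$, to differentiate it along the solutions of \eqref{eq3}, and then to eliminate the steady-state quantities by means of \eqref{eq4}. The dissipation inequality $\dot{S}_1\le(\omega-\bar{\omega})^\top(P-\bar{P})$ is the real content; nonnegativity of $S_1$ is handled separately via Assumption~\ref{ass2}.

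First I would set $U(\theta,V):=-\boldsymbol{1}_m^\top\Upsilon(V)\cos(\theta)+\tfrac{1}{2}V^\top DV$. Since $\Upsilon$ is diagonal and, by the definition of $D$, $D_{ii}=\tfrac{1}{\chi_{di}}-B_{ii}=E_{ii}(\theta)$ is constant, one obtains $\nabla_\theta U(\theta,V)=\Upsilon(V)\sin(\theta)$ and $\nabla_V U(\theta,V)=E(\theta)V$. Using the third relation in \eqref{eq4}, namely $\chi_d E(\bar{\theta})\bar{V}=\bar{E}_f$, one then checks that $S_1$ equals $U(\theta,V)-U(\bar{\theta},\bar{V})-\nabla U(\bar{\theta},\bar{V})^\top\col(\theta-\bar{\theta},\,V-\bar{V})+\tfrac{1}{2}(\omega-\bar{\omega})^\top\tau_p(\omega-\bar{\omega})$: the $\theta$-linear term is $-\big(\Upsilon(\bar{V})\sin(\bar{\theta})\big)^\top(\theta-\bar{\theta})=-\nabla_\theta U(\bar{\theta},\bar{V})^\top(\theta-\bar{\theta})$, while the $V$-linear term $-\bar{E}_{fd}(V-\bar{V})$ is $-\nabla_V U(\bar{\theta},\bar{V})^\top(V-\bar{V})$ once one identifies $\bar{E}_{fd}=\chi_d^{-1}\bar{E}_f=E(\bar{\theta})\bar{V}$ via the voltage equation. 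Nonnegativity of $S_1$ around the steady state then follows from $\tau_p>0$ together with positive definiteness of the Hessian of $U$ at $(\bar{\theta},\bar{V})$, which is exactly what Assumption~\ref{ass2} provides: indeed $\nabla^2_{\theta\theta}U=\Upsilon(V)\diag(\cos(\theta))>0$ for $\theta\in(-\tfrac{\pi}{2},\tfrac{\pi}{2})^m$ (first condition in \eqref{eq5}), $\nabla^2_{VV}U=E(\theta)>0$ by the Remark on susceptance and reactance, and the second inequality in \eqref{eq5} is the associated Schur-complement-type condition.

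Next I would differentiate. Using $\mathcal{A}^\top\bar{\omega}=\boldsymbol{0}$ and subtracting \eqref{eq4} line-by-line from \eqref{eq3}, the incremental dynamics read $\dot{\theta}=\mathcal{A}^\top(\omega-\bar{\omega})$, $\tau_p\dot{\omega}=-\psi(\omega-\bar{\omega})+(P-\bar{P})-\mathcal{A}\big(\Upsilon(V)\sin(\theta)-\Upsilon(\bar{V})\sin(\bar{\theta})\big)$, and $\tau_v\dot{V}=-\chi_d\big(E(\theta)V-E(\bar{\theta})\bar{V}\big)$. Substituting these into $\dot{S}_1=\big(\Upsilon(V)\sin(\theta)-\Upsilon(\bar{V})\sin(\bar{\theta})\big)^\top\dot{\theta}+\big(E(\theta)V-E(\bar{\theta})\bar{V}\big)^\top\dot{V}+(\omega-\bar{\omega})^\top\tau_p\dot{\omega}$, the two terms carrying the factor $\Upsilon(V)\sin(\theta)-\Upsilon(\bar{V})\sin(\bar{\theta})$ cancel, since one is the transpose of the other with the opposite sign. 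What is left is
\begin{equation*}
\dot{S}_1=-(\omega-\bar{\omega})^\top\psi(\omega-\bar{\omega})-\big(E(\theta)V-E(\bar{\theta})\bar{V}\big)^\top\tau_v^{-1}\chi_d\big(E(\theta)V-E(\bar{\theta})\bar{V}\big)+(\omega-\bar{\omega})^\top(P-\bar{P}).
\end{equation*}
Since the damping matrix $\psi$ is positive semidefinite, $\tau_v>0$, and $\chi_{di}=X_{di}-X'_{di}>0$ (see the Remark on susceptance and reactance, so that $\tau_v^{-1}\chi_d$ is diagonal and positive definite), the first two terms are nonpositive, whence $\dot{S}_1\le(\omega-\bar{\omega})^\top(P-\bar{P})$, i.e., system \eqref{eq3} is incrementally passive with supply rate $(\omega-\bar{\omega})^\top(P-\bar{P})$.

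The step I expect to cost the most care is the second one: verifying that $\nabla U$ reproduces every linear term of the somewhat cumbersome expression \eqref{eq7} for $S_1$ — in particular matching $\bar{E}_{fd}$ with $\chi_d^{-1}\bar{E}_f$ through the voltage steady-state equation, and checking that $D$ is precisely the diagonal part of $E$ — and pinning down the exact Hessian (and weighting) whose positive definiteness is encoded by \eqref{eq5}. Once the Bregman structure is in place, the differentiation is routine, its only real mechanisms being the cancellation of the $\mathcal{A}$-coupling terms between the $\theta$- and $\omega$-equations and the definiteness of $\tau_v^{-1}\chi_d$ and $\psi$.
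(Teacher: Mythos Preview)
Your proof is correct and follows essentially the same route as the paper, which simply cites \cite{ref16,ref47} and records the resulting dissipation equality \eqref{eq6}; your Bregman-distance reading of $S_1$, the gradient identities $\nabla_\theta U=\Upsilon(V)\sin(\theta)$, $\nabla_V U=E(\theta)V$, and the cancellation of the $\mathcal{A}$-coupling terms are precisely the mechanics behind those references. Your dissipation term $-(E(\theta)V-E(\bar\theta)\bar V)^\top\tau_v^{-1}\chi_d(E(\theta)V-E(\bar\theta)\bar V)$ is the same quantity the paper writes as $-(\chi_d E(\theta)V-\bar E_{fd})^\top\tau_v^{-1}(\chi_d E(\theta)V-\bar E_{fd})$ up to how the diagonal factor $\chi_d$ is distributed (and the paper's overloading of $\bar E_{fd}$ versus $\bar E_f$), so there is no substantive discrepancy.
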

\begin{proof}
The proof follows from combining \cite[Lemma~2]{ref16} and \cite[Lemma~3]{ref47}. Specifically, under the Assumption~\ref{ass2}, the storage function \eqref{eq7} is a positive definite function and satisfies
		\begin{equation}\label{eq6}
		\begin{split}
			\dot{S}_1
			=&-(\chi_{d}E(\theta)V-\bar{E}_{fd})^\top \tau_v^{-1}(\chi_{d}E(\theta)V-\bar{E}_{fd})
			\\&+(\omega-\bar{\omega})^\top (P-\bar{P})-(\omega-\bar{\omega})^\top \psi(\omega-\bar{\omega}),
		\end{split}
	\end{equation}
	along the solutions to \eqref{eq3}. 
\end{proof}

Now, we consider the following controller proposed in  \cite{ref47,ref49} for the turbine-governor $i\in \mathcal{V}_{c}$  
\begin{equation}\label{eq16}
	\begin{split}
		\tau_{\delta i}\dot{\delta}_{ i} = &-\delta_{i}+P_{c i}, \\
		u_{ci} = &~\delta_i,
	\end{split}
\end{equation}
where $\delta_i : \R_{\geq 0}\rightarrow\R$ and $\tau_{\delta i}\in\R_{>0}$. Then, in analogy with \cite[Lemma~5]{ref47} the incremental passivity of system  \eqref{eq15} in closed-loop with \eqref{eq16} is obtained via the following lemma.

\begin{lemma}\textbf{(Incremental passivity of \eqref{eq15}, \eqref{eq16}).}\label{prop:2}
	Let Assumption~\ref{ass1} hold. System \eqref{eq15} with controller \eqref{eq16} is  incrementally passive  with respect to the storage
	function 
	\begin{equation}\label{17.5}
		\begin{split}
			S_{2 i}=&~\dfrac{\tau_{c i}\xi_i}{2}(P_{c i}-{P}^\mathrm{opt}_{c i})^2+\frac{\tau_{\delta i}\xi_i}{2}(\delta_i-\bar{\delta}_i)^2,
		\end{split}
	\end{equation}
	and supply rate $-(P_{c i}-{P}^\mathrm{opt}_{c i})\omega_{i}$, where  the steady-state solution $({P}^\mathrm{opt}_{c i},\bar{\delta}_i)$ satisfies \eqref{eq4} and
	\begin{equation}\label{eq17}
		\begin{split}
			0= &-\bar{\delta}_{i}+{P}_{c i}^\mathrm{opt},
		\end{split}
	\end{equation}
	with ${P}^\mathrm{opt}_{c i}$ given by \eqref{optimal}.
\end{lemma}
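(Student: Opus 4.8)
The plan is to follow the line of reasoning in \cite[Lemma~5]{ref47}: first check that the candidate storage function $S_{2i}$ in \eqref{17.5} is positive definite, then differentiate it along the closed-loop trajectories and complete the square to bound $\dot S_{2i}$ by the asserted supply rate. Positive definiteness is immediate: since $\tau_{ci},\tau_{\delta i},\xi_i\in\R_{>0}$, the function $S_{2i}$ is a sum of two nonnegative quadratic terms that vanishes only at $(P_{ci},\delta_i)=(P_{ci}^{\mathrm{opt}},\bar\delta_i)$.

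For the dissipation inequality, I would substitute $u_{ci}=\delta_i$ into \eqref{eq15} and append \eqref{eq16} to obtain the closed loop $\tau_{ci}\dot P_{ci}=-P_{ci}-\xi_i^{-1}\omega_i+\delta_i$, $\tau_{\delta i}\dot\delta_i=-\delta_i+P_{ci}$. Using the steady-state relations in \eqref{eq4} and \eqref{eq17} at the optimal operating point --- where $\bar\omega=\boldsymbol{0}$ and $\bar P=P^{\mathrm{opt}}$ by Assumption~\ref{ass1}, so that $\bar\delta_i=P_{ci}^{\mathrm{opt}}$ --- I would rewrite these equations in incremental form, i.e. in terms of $P_{ci}-P_{ci}^{\mathrm{opt}}$ and $\delta_i-\bar\delta_i$. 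Differentiating $S_{2i}$ and inserting the incremental dynamics, a short computation shows that the two quadratic self-terms and the two bilinear terms coupling $P_{ci}-P_{ci}^{\mathrm{opt}}$ and $\delta_i-\bar\delta_i$ assemble into a single perfect square, yielding
\begin{equation*}
\begin{split}
\dot S_{2i} &= -\xi_i\bigl[(P_{ci}-P_{ci}^{\mathrm{opt}})-(\delta_i-\bar\delta_i)\bigr]^2-(P_{ci}-P_{ci}^{\mathrm{opt}})\,\omega_i\\
&\le -(P_{ci}-P_{ci}^{\mathrm{opt}})\,\omega_i,
\end{split}
\end{equation*}
which is exactly the incremental passivity inequality with storage function $S_{2i}$ and supply rate $-(P_{ci}-P_{ci}^{\mathrm{opt}})\omega_i$ (in fact with the extra dissipation term $-\xi_i[(P_{ci}-P_{ci}^{\mathrm{opt}})-(\delta_i-\bar\delta_i)]^2$).

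Since the whole argument is a short explicit calculation, I do not expect a genuine obstacle. The one point that requires care is the bookkeeping of the steady-state values: one must invoke the part of Assumption~\ref{ass1} stating that \eqref{eq4} also holds with $\bar\omega=\boldsymbol{0}$ and $\bar P=P^{\mathrm{opt}}$ (this is what lets the supply rate be written in terms of $\omega_i$ rather than $\omega_i-\bar\omega_i$), combine it with $\bar\delta_i=P_{ci}^{\mathrm{opt}}$ from \eqref{eq17}, and then check that the coupling terms indeed collect into the single perfect square displayed above.
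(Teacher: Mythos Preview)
Your proposal is correct and follows exactly the approach the paper intends: the paper's own proof consists solely of the citation ``See \cite[Lemma~5]{ref47}'', and you have simply unpacked that reference by verifying positive definiteness of $S_{2i}$ and carrying out the explicit derivative computation that collapses to $-\xi_i[(P_{ci}-P_{ci}^{\mathrm{opt}})-(\delta_i-\bar\delta_i)]^2-(P_{ci}-P_{ci}^{\mathrm{opt}})\omega_i$. Your remark about invoking the $\bar\omega=\boldsymbol{0}$, $\bar P=P^{\mathrm{opt}}$ part of Assumption~\ref{ass1} to obtain the supply rate in terms of $\omega_i$ is the only subtle bookkeeping point, and you handle it correctly.
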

\begin{proof}
See \cite[Lemma~5]{ref47}.
\end{proof}

\subsection{Stochastic Passivity Property for DFIG-Based Wind Turbine}
In this subsection, we propose a new control scheme to control the active power generated by the DFIG-based wind turbine. Then, we show that the DFIG-based wind turbine \eqref{eq19}, \eqref{eq20} in closed-loop with the proposed controller is stochastically passive.
Before introducing the DFIG controller, we recall for the readers' convenience the definitions of Ito derivative and stochastic passivity through the Ito calculus framework \cite{ref25ch2,ref26.1ch2}.
\begin{definition}{\bf (Ito derivative).} Consider a storage function $S(x)$, which is twice continuously differentiable. Then, $\mathcal{L}S(x)$ denotes the Ito derivative of $S(x)$ along the SDE \eqref{eq17.9}, \textit{i.e.},
	\begin{equation}
		\label{eq8}
			\mathcal{L}S(x)=\dfrac{\partial S(x)}{\partial x}f(x,u)+\frac{1}{2}\text{tr}\{g^\top (x)\dfrac{\partial ^2 S(x)}{\partial x^\top \partial x}g(x)\}.
	\end{equation}
\end{definition}
\vspace{.1cm}
\begin{definition} {\bf (Stochastic passivity).}~Consider system \eqref{eq17.9} with output $y=\eta(x)$. Assume that the deterministic and stochastic terms of the SDE (\ref{eq17.9}) at the equilibrium point are identically zero, \textit{i.e.}, $f(\bar{x},\bar{u})=g(\bar{x})=\boldsymbol{0}$. Then, system (\ref{eq17.9}) is said to be stochastically passive with respect to the supply rate $u^\top y$ if there exists a twice continuously differentiable positive semi-definite storage function $S(x)$ satisfying
	\begin{equation}\label{eq10}
		\mathcal{L}S(x)\leq u^\top y,~ \forall (x,u) \in\mathbb{R}^N\times\mathbb{R}^P.
	\end{equation}
\end{definition}

Now, consider the following controller for the DFIG-based wind turbine generator $i\in\mathcal{V}_{w}$:  
\begin{subequations}\label{eq22}
	\begin{align}\nonumber
		V_{dr i}=&-L_i(x_i)\big(\bar{K}_{1 i}(x_i)+\bar{K}_{2 i}(x_i)+\bar{K}_{3 i}(x_i)+\bar{x}_i^\top \Pi_i\bar{x}_i\\ \label{eq22.1}
		&+\bar{x}_i^\top \Pi_i x_i+\bar{x}_i^\top \Psi_i H_{g i}(x_i)\big) \\  \label{eq22.2}
		V_{qr i}=&-L_i(x_i)\big(D_{1 i}(x_i)\omega_i + D_{2 i}(x_i)\delta_i + D_{3 i}(x_i)\big) \\  \label{eq22.3}
		\tau_{\delta i} \dot{\delta}_i=&-\delta_i+P_{wi},
	\end{align}
\end{subequations}
where 
\begin{equation}
	\begin{split} 
		L_i(x_i)=& \, \frac{X_{r i}}{X_{r i}(\iota_{dr i}-\bar{\iota}_{dr i})-X_{m i}(\iota_{ds i}-\bar{\iota}_{ds i})} \\ \nonumber
		\bar{K}_{1 i}(x_i)=&~\rho\pi r_i^2C_{Q i}((f_{r i}-\bar{f}_{r i})v_i^2+v_i(f_{r i} - \bar{f}_{r i})^2)\\ \nonumber
		&+(f_{r i}-\bar{f}_{r i})^2 \\ \nonumber
		\bar{K}_{2 i}(x_i)=&~\Big(\iota_{{ds} i}-\dfrac{X_{m i}}{X_{s i}}\iota_{dr i}\Big)V_{ti}+\omega_i(P_{w i}-{P}_{w i}^\mathrm{opt})\\ \nonumber
		\bar{K}_{3 i}(x_i)=&~2\bar{f}_{r i} X_{m i}(\iota_{ds i}\iota_{qr i}-\iota_{qs i}\iota_{dr i})\\ \nonumber 
		&+\Big(R_{r i}\frac{X_{m i}}{X_{r i}}+R_{s i}\frac{X_{m i}}{X_{s i}}\Big)\iota_{dr i}\iota_{ds i}\\
		&+\Big(R_{r i}\frac{X_{m i}}{X_{r i}}+R_{s i}\frac{X_{m i}}{X_{s i}}\Big)\iota_{qr i}\iota_{qs i}\\
		\Pi_i=& \, \diag\big(R_{s i}, R_{s i}, R_{r i}, R_{r i}, 0, 0\big)\\
		\Psi_i=& \, \diag\Big(\dfrac{K_i \bar{\iota}_{ds i}}{f_{b i}X_{r i}}, \dfrac{K_i \bar{\iota}_{qs i}}{f_{b i}X_{r i}}, \dfrac{K_i \bar{\iota}_{dr i}}{f_{b i}X_{s i}}, \dfrac{K_i \bar{\iota}_{qr i}}{f_{b i}X_{s i}}, 0, 0\Big)\\ \nonumber
		D_{1 i}(x)=&-X_{u i} \iota_{qr i}f_{r i} +X_{u i}\bar{\iota}_{{qr} i} \bar{f}_{r i},\\ \nonumber
		D_{2 i}(x)=&~(P_{w i}-{P}^\mathrm{opt}_{w i})\delta_i \\ \nonumber
		D_{3 i}(x)=&~(\delta_i - P_{w i})^2-(P_{w i}-{P}^\mathrm{opt}_{w i}){P}^\mathrm{opt}_{w i},
	\end{split}
\end{equation}

Note that the controller \eqref{eq22} requires the information of $\bar{x}_i$ and ${P}^\mathrm{opt}_{w i}$ which can be obtained by solving \eqref{eq4} and \eqref{optimal}, respectively.
In order to obtain the stochastic passivity of \eqref{eq19}, \eqref{eq20}, \eqref{eq22}, we need to consider the following assumptions on the wind turbine and speed.
\begin{assumption}\label{ass3} \textbf{(Condition on the rotational speed).}
	The rotational speed $f_{r i}$ of the wind tubine $i\in\mathcal{V}_{w}$ is bounded as $\vert f_{r i}\vert<\bar{\gamma}_i$, $\bar{\gamma}_i\in\R_{>0}$.
\end{assumption}

\begin{assumption}\label{ass4} \textbf{(Condition on the parameters of \eqref{eq20}).}
	The wind speed parameters in (\ref{eq20}) satisfies 
	\begin{equation}\label{eq24}
		\mu_{w i}+\bar{f}_{r i} > \frac{\sigma_{w_i}^2}{2}+v_{i}+\bar{\gamma}_i, \quad i\in\mathcal{V}_{w}.
	\end{equation}
\end{assumption}
\vspace{.1cm}

Note that Assumption~\ref{ass3} is true in practice, since the rotational speed of a wind turbine is limited by the mechanical characteristics of the turbine itself, which is indeed usually equipped with mechanical breaks that avoid high rotational speed.
Assumption~\ref{ass4} is instead a sufficient technical condition to establish the stochastic passivity of the wind turbine.

Now, the stochastic passivity of DFIG-based wind turbine dynamics \eqref{eq19}, with wind speed dynamics \eqref{eq20} and controller \eqref{eq22} is obtained via the following proposition.
\begin{proposition} \label{lemma 1}\textbf{(Stochastic passivity of  \eqref{eq19}, \eqref{eq20}, \eqref{eq22}).}
	Let Assumptions~\ref{ass3} and \ref{ass4} hold. System \eqref{eq19}, \eqref{eq20} in closed-loop with \eqref{eq22} is stochastically passive with respect to the storage function 
	\begin{equation}\label{eq25}
		\begin{split}
		S_{3i} =&~\frac{K_i}{2f_{b i} X_{r i}}\Big(({\iota_{ds i}}-{\bar{\iota}_{ds i}})^2 +
			({\iota_{qs i}}-{\bar{\iota}_{qs i}})^2
		\\&	+({\iota_{dr i}}-{\bar{\iota}_{dr i}})^2+
			({\iota_{qr i}}-{\bar{\iota}_{qr i}})^2\Big) +2H_i(f_{r i}-\bar{f}_{r i})^2\\
			&+\rho\pi r_i^3C_{Q i}\tilde{v}_{i}^2+\frac{\tau_{\delta i}}{2}(\delta_i-\bar{\delta}_i)^2,
		\end{split}
	\end{equation}
	and supply rate $-\omega_i(P_{w i}-{P}_{w i}^\mathrm{opt})$, where the steady-state solution $(\bar{x}_i,{P}_{w i}^\mathrm{opt},\bar{\delta}_i)$  satisfies 
	\eqref{eq4} and
	\begin{equation}\label{eq17.}
		\begin{split}
			0= &-\bar{\delta}_{i}+{P}_{w i}^\mathrm{opt},
		\end{split}
	\end{equation}
	with ${P}^\mathrm{opt}_{w i}$ given by \eqref{optimal}.
\end{proposition}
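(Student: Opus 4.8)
The plan is to verify the stochastic passivity inequality $\mathcal{L}S_{3i}(x_i) \le -\omega_i(P_{wi}-P_{wi}^{\mathrm{opt}})$ by direct computation of the Ito derivative \eqref{eq8} of the storage function \eqref{eq25} along the closed-loop dynamics \eqref{eq19}, \eqref{eq20}, \eqref{eq22}. First I would record the structure of $S_{3i}$: it is a sum of quadratic ``electrical'' terms in the current deviations scaled by $K_i/(2 f_{bi} X_{ri})$, a mechanical term $2H_i(f_{ri}-\bar f_{ri})^2$, a wind-speed term $\rho\pi r_i^3 C_{Qi}\tilde v_i^2$, and the controller term $\tfrac{\tau_{\delta i}}{2}(\delta_i-\bar\delta_i)^2$. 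Because only the last component of $x_i$ (namely $\tilde v_i$) is driven by Brownian motion (recall $G_i = \diag(0,0,0,0,0,\sigma_{wi}\tilde v_i)$), the trace/Hessian term in \eqref{eq8} contributes only through $\tfrac{\partial^2 S_{3i}}{\partial \tilde v_i^2} = 2\rho\pi r_i^3 C_{Qi}$, giving a clean second-order contribution $\rho\pi r_i^3 C_{Qi}\,\sigma_{wi}^2\,\tilde v_i^2$.

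Next I would compute the first-order (drift) part $\tfrac{\partial S_{3i}}{\partial x_i} H_{gi}(x_i)$ together with the control terms $b_{si}V_{dri}$, $b_{ri}V_{qri}$. The key algebraic fact is that the electrical subsystem of the DFIG, written in the $K_i/(f_{bi}X_{ri})$-weighted inner product, has a skew-symmetric ``rotational'' part (the $f_{ri}$-dependent coupling) that cancels in the quadratic form, leaving only the dissipative resistive terms $-R_{si}(\cdot)^2$, $-R_{ri}(\cdot)^2$ and cross terms with the steady-state values; the matrices $\Pi_i=\diag(R_{si},R_{si},R_{ri},R_{ri},0,0)$ and $\Psi_i$ appearing in the controller \eqref{eq22.1} are precisely designed to complete these into perfect squares / cancel the indefinite cross terms, so that after substituting $V_{dri}$ from \eqref{eq22.1} the entire current-plus-$V_{dri}$ contribution collapses to a nonpositive expression plus the terms $2\bar f_{ri}X_{mi}(\iota_{dsi}\iota_{qri}-\iota_{qsi}\iota_{dri})$ and the $\bar K_{1i},\bar K_{2i}$ pieces. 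Meanwhile the mechanical term $4H_i(f_{ri}-\bar f_{ri})\dot f_{ri}$ produces $2(f_{ri}-\bar f_{ri})(T_{mi}(\tilde v_i) - X_{mi}(\iota_{dsi}\iota_{qri}-\iota_{qsi}\iota_{dri}))$, and the $\bar f_{ri}$-shifted cross term $2\bar f_{ri}X_{mi}(\cdots)$ from $V_{dri}$ combines with $-2 f_{ri} X_{mi}(\cdots)$ here so that all the indefinite current$\times f_{ri}$ products cancel. The $T_{mi}(\tilde v_i)=\tfrac12\rho\pi r_i^3 C_{Qi}(v_i+\tilde v_i)^2$ term, expanded, is matched against the wind-speed drift $-\mu_{wi}\tilde v_i$ contributing $-2\rho\pi r_i^3 C_{Qi}\mu_{wi}\tilde v_i^2$ and against $\bar K_{1i}$; this is where Assumption~\ref{ass4}, i.e. $\mu_{wi}+\bar f_{ri} > \tfrac{\sigma_{wi}^2}{2}+v_i+\bar\gamma_i$ together with Assumption~\ref{ass3} ($|f_{ri}|<\bar\gamma_i$), is invoked to dominate the leftover $\tilde v_i^2$ and cross $\tilde v_i f_{ri}$ terms and keep the bound nonpositive. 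Finally the $\delta_i$-dynamics \eqref{eq22.3} contribute $\tau_{\delta i}(\delta_i-\bar\delta_i)\dot\delta_i = (\delta_i-\bar\delta_i)(-\delta_i+P_{wi}) = (\delta_i-\bar\delta_i)(-(\delta_i-\bar\delta_i)+(P_{wi}-\bar\delta_i))$, and using \eqref{eq17.} ($\bar\delta_i = P_{wi}^{\mathrm{opt}}$) plus the $D_{1i},D_{2i},D_{3i}$ terms from $V_{qri}$ in \eqref{eq22.2} (which encode exactly $P_{wi}=\zeta_i(x_i)=-X_{ui}\iota_{qri}f_{ri}$ and the $(\delta_i-P_{wi})^2$, $(P_{wi}-P_{wi}^{\mathrm{opt}})P_{wi}^{\mathrm{opt}}$ corrections), everything not manifestly nonpositive reorganizes into precisely the supply term $-\omega_i(P_{wi}-P_{wi}^{\mathrm{opt}})$ coming from $\bar K_{2i}$ via $V_{dri}$.

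Assembling the pieces, I would show $\mathcal{L}S_{3i} \le -\omega_i(P_{wi}-P_{wi}^{\mathrm{opt}})$ by checking that the sum of all remaining non-supply terms is a nonpositive quadratic form in $(\iota_{dsi}-\bar\iota_{dsi},\ldots,\iota_{qri}-\bar\iota_{qri},f_{ri}-\bar f_{ri},\tilde v_i,\delta_i-\bar\delta_i)$; positivity of $R_{si},R_{ri},H_i$ handles the electrical/mechanical block, and Assumptions~\ref{ass3}--\ref{ass4} handle the $(f_{ri},\tilde v_i)$ block. I would also note that $f(\bar x_i,\bar u_i)=g(\bar x_i)=\boldsymbol 0$ holds by Assumption~\ref{ass1} (last line of \eqref{eq4}, since $\tilde v_i=0$ forces $G_i(\bar x_i)=\boldsymbol 0$), so that the definition of stochastic passivity applies, and that $S_{3i}$ is twice continuously differentiable and positive semi-definite (indeed positive definite in the deviation coordinates since $K_i/(f_{bi}X_{ri})>0$, $H_i>0$, $\rho\pi r_i^3 C_{Qi}>0$, $\tau_{\delta i}>0$). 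The main obstacle I anticipate is bookkeeping: verifying that the deliberately intricate definitions of $L_i(x_i)$, $\bar K_{1i},\bar K_{2i},\bar K_{3i}$, $\Pi_i$, $\Psi_i$, $D_{1i},D_{2i},D_{3i}$ make the indefinite cross terms cancel \emph{exactly} and leave a clean sign-definite residual — in particular tracking the $f_{ri}$-bilinear current products and the $\tilde v_i$-quadratic terms through the $L_i(x_i)$ multiplication (whose denominator is engineered to cancel against the $(\iota_{dri}-\bar\iota_{dri})$, $(\iota_{dsi}-\bar\iota_{dsi})$ factors produced by $\tfrac{\partial S_{3i}}{\partial x_i}B_{ui}$) without sign errors. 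A secondary subtlety is confirming that the strict inequality in Assumption~\ref{ass4} provides enough margin to absorb the worst-case $|f_{ri}|\to\bar\gamma_i$ contribution uniformly in $x_i$, which is what makes the final quadratic form globally nonpositive on $\mathbb{R}^6$.
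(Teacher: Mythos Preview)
Your proposal is correct and follows essentially the same approach as the paper: compute $\mathcal{L}S_{3i}$ directly along the closed-loop dynamics, exploit the controller terms $V_{dri},V_{qri}$ to cancel the indefinite cross terms, and then invoke Assumptions~\ref{ass3} and \ref{ass4} to make the residual $(f_{ri},\tilde v_i)$-block nonpositive. The paper's proof simply records the outcome of this computation, namely
\[
\mathcal{L}S_{3i}
=-R_{si}(\iota_{qsi}-\bar\iota_{qsi})^2-\cdots-(\delta_i-\bar\delta_i)^2-(f_{ri}-\bar f_{ri})^2
-\rho\pi r_i^3 C_{Qi}\big(\mu_{wi}-\tfrac{\sigma_{wi}^2}{2}-v_i-(f_{ri}-\bar f_{ri})\big)\tilde v_i^{\,2}
-\rho\pi r_i^3 C_{Qi}\,v_i\big((f_{ri}-\bar f_{ri})+\tilde v_i\big)^2
-\omega_i(P_{wi}-P_{wi}^{\mathrm{opt}}),
\]
and concludes; your plan is exactly the detailed bookkeeping behind this identity, including the role of $L_i(x_i)$ in clearing the denominator produced by $\partial S_{3i}/\partial x_i\cdot B_{ui}$ and the use of $|f_{ri}|<\bar\gamma_i$ together with \eqref{eq24} to guarantee the coefficient of $\tilde v_i^{\,2}$ is nonnegative.
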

\begin{proof}
	The Ito derivative of the storage function \eqref{eq25} satisfies
	\begin{equation}
		\begin{split}
			\mathcal{L}S_{3i}
			=&-R_{s i}({\iota_{qs i}}-{\bar{\iota}_{qs i}})^2-R_{s i}({\iota_{ds i}}-{\bar{\iota}_{ds i}})^2\\ 
			&-R_{r i}({\iota_{qr i}}-{\bar{\iota}_{qr i}})^2-R_{r i}({\iota_{dr i}}-{\bar{\iota}_{dr i}})^2\\  
			&-(\delta_i - P_{w i})^2-\omega_i(P_{w i}-{P}_{w i}^\mathrm{opt})-(\delta_i -\bar{\delta}_i)^2 \\ 
			&-(f_{r i}-\bar{f}_{r i})^2-\rho\pi r_i^3C_{Q i}(\mu_{w_i} -\frac{\sigma_{w_i}^2}{2}-v_{i} \\
			&-(f_{r i}-\bar{f}_{r i}))\tilde{v}_{i}^2-\rho\pi r_i^3C_{Q i}v_{i}((f_{r i} -\bar{f}_{r i})+\tilde{v}_{i})^2,
		\end{split}
	\end{equation}
	along the solution to \eqref{eq19}, \eqref{eq20}, \eqref{eq22}. Then, we can conclude
	that $\mathcal{L}S_{3i}\leq -(P_{w i}-{P}^\mathrm{opt}_{w i})\omega_{i}$. 
\end{proof}

\subsection{Closed-loop analysis}
In this subsection, we show that the closed-loop system is stochastically stable, achieving Objectives~\ref{obj:1} and \ref{obj:2}. First, we recall the definition of (asymptotic) stochastic stability \cite{ref25ch2,ref26.1ch2}.
	\begin{definition} {\bf((Asymptotic) stochastic stability).}  System (\ref{eq17.9}) is (asymptotically) stochastically stable if a twice continuously differentiable positive definite Lyapunov function $S: \R^N\longrightarrow \R_{>0}$ exists such that $\mathcal{L}S$ is (negative definite) negative semi-definite.
\end{definition}
Now, in order to achieve Objective 2, we modify controllers \eqref{eq16} and \eqref{eq22.3} as follows (see \cite{ref47,ref49}):
\begin{subequations}\label{eq31}
	\begin{align} \nonumber
		\tau_{\delta_i} \dot{\delta}_i=&-\delta_i+P_{c i} \\ \label{eq31a}
		&-\xi_i^{-1}q_i\sum_{j\in\mathcal{N}^\mathrm{com}_i}\big(q_i\delta_i +z_i-(q_j\delta_j -z_j)\big),~\forall i\in\mathcal{V}_{c}\\ \nonumber
		\tau_{\delta_i} \dot{\delta}_i=&-\delta_i+P_{w i} \\ \label{eq31b}
		&-q_i\sum_{j\in\mathcal{N}^\mathrm{com}_i}\big(q_i\delta_i +z_i-(q_j\delta_j- z_j)\big),~\forall i\in\mathcal{V}_{w}
	\end{align}
\end{subequations}
where $\tau_{\delta i}$ is the design parameter and $\mathcal{N}^\mathrm{com}_i$ is the set of areas communicating with area $i$. 
The distributed controller \eqref{eq31} can be written compactly
for all $i\in\mathcal{V}$ as
\begin{equation}\label{eq31.}
	\tau_{\delta} \dot{\delta}=-\delta+P -\blockdiag(\xi^{-1},\mathds{I}_{n_w})Q L^\mathrm{com}(Q\delta + Z),
\end{equation}
where $\delta : \R_{\geq 0}\rightarrow \R^{n}$, $\tau_{\delta} \in\R^{n\times n}$ and  $L^{\mathrm{com}}\in\R^{n\times n}$ is the Laplacian matrix associated with a connected communication network. 
More precisely, the term $Q\delta + R$ in \eqref{eq31.} reflects the marginal cost associated with the objective function $J(P)$ in \eqref{eq13} and $L^{\mathrm{com}}(Q\delta + Z)$ represents the exchange of such information among the areas of the power network.
%
In the following theorem, we show that the closed-loop system \eqref{eq3}, \eqref{governor}, \eqref{eq21}, \eqref{eq22.1}, \eqref{eq22.2}, \eqref{eq31.} is stochastically stable and Objectives~\ref{obj:1} and \ref{obj:2} are
attained.
\begin{theorem}\textbf{(Closed-loop analysis).}
	Let Assumptions~\ref{ass1}--\ref{ass4} hold. Consider system \eqref{eq3}, \eqref{governor}, \eqref{eq21} with controller \eqref{eq22.1}, \eqref{eq22.2}, \eqref{eq31.}. Then, the solutions to the closed-loop system starting sufficiently close to $(\bar{\theta}, \bar{\omega}=\boldsymbol{0}, \bar{V}, {P}^\mathrm{opt}, \bar{x}, \bar{\delta})$ stochastically converge to the set where $\bar{\omega}=\boldsymbol{0}$ and $\bar{P}={P}^\mathrm{opt}$, with ${P}^\mathrm{opt}$ given by \eqref{optimal}, \textit{i.e.}, achieving Objectives~\ref{obj:1} and \ref{obj:2}. 
\end{theorem}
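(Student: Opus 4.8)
The plan is to certify closed-loop stochastic stability with the composite Lyapunov function obtained by summing the storage functions already provided by Lemma~\ref{prop:1}, Lemma~\ref{prop:2} and Proposition~\ref{lemma 1}, all referenced to the \emph{optimal} steady state $(\bar\theta,\bar\omega=\boldsymbol 0,\bar V,P^{\mathrm{opt}},\bar x,\bar\delta)$ whose existence is granted by Assumption~\ref{ass1}. Concretely, set $S:=S_1+\sum_{i\in\mathcal V_c}S_{2i}+\sum_{i\in\mathcal V_w}S_{3i}$. By Assumption~\ref{ass2} the term $S_1$ has a strict local minimum at $(\bar\theta,\bar V,\boldsymbol 0)$, while $S_{2i}$ and $S_{3i}$ are positive (semi)definite quadratics in the shifted turbine‑governor and DFIG states; hence $S$ is twice continuously differentiable and locally positive definite, vanishing at the desired equilibrium. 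Since the steady state of \eqref{eq20} is $\bar{\tilde v}_i=0$, the diffusion matrix $G(\bar x)$ vanishes as well, so the hypotheses underlying the stochastic‑passivity/stability framework are in force.

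Next I would compute the Ito derivative $\mathcal L S$ along the closed loop \eqref{eq3}, \eqref{governor}, \eqref{eq21}, \eqref{eq22.1}, \eqref{eq22.2}, \eqref{eq31.}. The diffusion enters only through $S_{3i}$, so the second‑order Ito contribution is precisely the one already handled in the proof of Proposition~\ref{lemma 1}. For the drift I would invoke the three (incremental / stochastic) passivity inequalities: $\dot S_1$ yields the supply rate $(\omega-\bar\omega)^\top(P-\bar P)$ together with the nonpositive frequency damping $-(\omega-\bar\omega)^\top\psi(\omega-\bar\omega)$ and the voltage dissipation term of \eqref{eq6}; $\sum_i\mathcal L S_{2i}$ and $\sum_i\mathcal L S_{3i}$ yield the supply rates $-\sum_{i\in\mathcal V_c}(P_{ci}-P^{\mathrm{opt}}_{ci})\omega_i$ and $-\sum_{i\in\mathcal V_w}\omega_i(P_{wi}-P^{\mathrm{opt}}_{wi})$ plus their own nonpositive $\delta$-, current- and wind-speed dissipation terms (the last controlled by Assumptions~\ref{ass3}--\ref{ass4}). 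Because $\bar\omega=\boldsymbol 0$ and $\bar P=P^{\mathrm{opt}}$ these supply rates cancel identically: $(\omega-\bar\omega)^\top(P-\bar P)-\sum_{i\in\mathcal V_c}(P_{ci}-P^{\mathrm{opt}}_{ci})\omega_i-\sum_{i\in\mathcal V_w}\omega_i(P_{wi}-P^{\mathrm{opt}}_{wi})=\boldsymbol 0$.

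The one genuinely new ingredient relative to Lemmas~\ref{prop:1}--\ref{prop:2} and Proposition~\ref{lemma 1} is the distributed consensus term added in \eqref{eq31.}. Using that the weightings $\tfrac{\tau_{\delta i}\xi_i}{2}$ in $S_{2i}$ and $\tfrac{\tau_{\delta i}}{2}$ in $S_{3i}$ are exactly matched to the block $\blockdiag(\xi^{-1},\mathds I_{n_w})$ in \eqref{eq31.}, this term contributes to $\mathcal L S$ the quantity $-(Q(\delta-\bar\delta))^\top L^{\mathrm{com}}(Q\delta+Z)$. Since $Q\bar\delta+Z=QP^{\mathrm{opt}}+Z$ is proportional to $\boldsymbol 1_n$ by \eqref{optimal} and $L^{\mathrm{com}}\boldsymbol 1_n=\boldsymbol 0$, one has $L^{\mathrm{com}}(Q\delta+Z)=L^{\mathrm{com}}Q(\delta-\bar\delta)$, so this contribution equals $-(\delta-\bar\delta)^\top QL^{\mathrm{com}}Q(\delta-\bar\delta)\le 0$ because $L^{\mathrm{com}}\ge\boldsymbol 0$. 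Collecting everything gives $\mathcal L S\le 0$, written as a sum of nonpositive terms, which by the definition of (asymptotic) stochastic stability certifies that the closed-loop equilibrium is stochastically stable.

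To upgrade stability to the stated convergence I would then run a stochastic invariance (LaSalle-type) argument as in \cite{ref47,ref49}: solutions starting sufficiently close to the equilibrium remain almost surely in a sublevel set of $S$ on which $S$ is positive definite and the controller gains are well defined, and converge to the largest invariant set contained in $\{\mathcal L S=0\}$. On that set the frequency-damping term forces $\omega\equiv\boldsymbol 0$, hence $\dot\theta=\mathcal A^\top\omega=\boldsymbol 0$; the $\delta$-quadratic dissipation terms of Lemma~\ref{prop:2} and Proposition~\ref{lemma 1} force $\delta\equiv\bar\delta$ and $P_{c}\equiv P^{\mathrm{opt}}_{c}$, $P_{w}\equiv P^{\mathrm{opt}}_{w}$, i.e.\ $P\equiv P^{\mathrm{opt}}$; and the algebraic relations in \eqref{eq4} pin down $(\theta,V,x)$ at the steady state, yielding Objectives~\ref{obj:1} and \ref{obj:2}. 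I expect the main obstacle to be twofold: (i) verifying that the consensus term really collapses to $-(\delta-\bar\delta)^\top QL^{\mathrm{com}}Q(\delta-\bar\delta)$, which hinges on the compatibility of the $\delta$-weightings in $S_{2i},S_{3i}$ with the $\xi^{-1}$/identity block structure of \eqref{eq31.}; and (ii) making the stochastic LaSalle step rigorous — ensuring the closed-loop solution is well defined and stays in the region where $S$ is positive definite (this is the reason the result is only local, since the DFIG gain $L_i(x_i)$ in \eqref{eq22} has a state-dependent denominator), and checking that the invariant set indeed reduces to $\omega=\boldsymbol 0$, $P=P^{\mathrm{opt}}$.
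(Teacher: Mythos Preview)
Your overall architecture is the same as the paper's: composite storage $S=S_1+\sum_{i\in\mathcal V_c}S_{2i}+\sum_{i\in\mathcal V_w}S_{3i}$, cancellation of the supply rates at $\bar\omega=\boldsymbol 0,\bar P=P^{\mathrm{opt}}$, treatment of the consensus term via $L^{\mathrm{com}}(Q\bar\delta+Z)=\boldsymbol 0$, and a stochastic LaSalle argument. The paper also records $\nabla S=\boldsymbol 0$ and $\nabla^2 S>\boldsymbol 0$ at the equilibrium (via the Schur complement and Assumption~\ref{ass2}) to certify the local minimum, which you allude to but may want to make explicit.

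The one genuine gap is in your invariance step. You assert that ``the $\delta$-quadratic dissipation terms of Lemma~\ref{prop:2} and Proposition~\ref{lemma 1} force $\delta\equiv\bar\delta$ and $P\equiv P^{\mathrm{opt}}$''. This is not what those terms give. For the conventional nodes, the dissipation produced by $S_{2i}$ collapses to $-\xi_i(\delta_i-P_{ci})^2$, so on $\{\mathcal L S=0\}$ you only obtain $\delta_i=P_{ci}$, not $\delta_i=\bar\delta_i$. The consensus dissipation $-(\delta-\bar\delta)^\top QL^{\mathrm{com}}Q(\delta-\bar\delta)$ only yields $Q(\delta-\bar\delta)\in\ker L^{\mathrm{com}}=\operatorname{span}\{\boldsymbol 1_n\}$, i.e.\ $\delta=\bar\delta+d(t)Q^{-1}\boldsymbol 1_n$ for some scalar $d(t)$. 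So at this point you have $P=\delta=P^{\mathrm{opt}}+d(t)Q^{-1}\boldsymbol 1_n$, and you still owe an argument that $d(t)=0$.

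The paper closes this gap not via the $\delta$-dissipation but via the swing equation on the invariant set: with $\omega\equiv\boldsymbol 0$ and $\theta,V$ constant one has $\boldsymbol 0=P^{\mathrm{opt}}+d(t)Q^{-1}\boldsymbol 1_n-P_l-\mathcal A\Upsilon(\bar V')\sin(\bar\theta')$; pre-multiplying by $\boldsymbol 1_n^\top$ and using $\boldsymbol 1_n^\top\mathcal A=\boldsymbol 0$ together with the optimality constraint $\boldsymbol 1_n^\top(P^{\mathrm{opt}}-P_l)=0$ from \eqref{eq13} forces $d(t)\,\boldsymbol 1_n^\top Q^{-1}\boldsymbol 1_n=0$, hence $d(t)=0$ and $P=P^{\mathrm{opt}}$. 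You correctly flagged the LaSalle step as obstacle~(ii), but this power-balance identity is the missing idea needed to carry it through.
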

\begin{proof}
	Following Lemmas~\ref{prop:1}, \ref{prop:2} and Proposition~\ref{lemma 1}, we consider the storage function $S=S_1+S_2+S_3$,
	where $S_1$ is given in \eqref{eq7}, $S_2=\sum_{i\in\mathcal{V}_{c}} S_{2 i}$, with $S_{2i}$ given by \eqref{17.5}, and $S_3=\sum_{i\in\mathcal{V}_{w}} S_{3 i}$, with $S_{3 i}$ given by \eqref{eq25}. Now, the gradient  of $S$ is given by
	\begin{equation}\label{eq33}
	\begin{split}
		\nabla S=& 
		\col\Big(\Upsilon(V)\sin(\theta)-\Upsilon(\bar{V})\sin(\bar{\theta}),\chi_{d}E(\theta)V-\bar{E}_{fd}, \\
			&\tau_{p}(\omega-\bar{\omega}), 
            \tau_{\delta}\blockdiag(\xi, \mathds{I}_{n_w})(\delta -\bar{\delta}), \\
			&\blockdiag(\tau_{c} \xi,\boldsymbol{0}_{n_w\times n_w})(P-{P}^\mathrm{opt}), K [f_{b}]^{-1}X_r^{-1}
			 \\
			&(\iota_{ds}-\bar{\iota}_{ds}), K [f_{b}]^{-1}X_r^{-1}(\iota_{qs}-\bar{\iota}_{qs}), 
			K [f_{b}]^{-1}X_r^{-1} \\
			&(\iota_{dr}-\bar{\iota}_{dr}), K [f_{b}]^{-1}X_r^{-1}(\iota_{qr}-\bar{\iota}_{qr}),\\
			&4H(f_r -\bar{f}_r), 2\rho \pi  [r]^3[C_Q]\tilde{v}\Big). 
			\end{split}
	\end{equation}
	We can observe from \eqref{eq33} that $\nabla S$ evaluated at $(\bar{\theta}, \bar{\omega}=\boldsymbol{0}, \bar{V}, {P}^\mathrm{opt}, \bar{x}, \bar{\delta})$ is equal to zero. Then, the Hessian matrix of $S$ is given by
	\begin{equation}\label{eq34}
		\begin{split}
			\nabla^2 S=&\blockdiag\Big(\Lambda, \tau_{p}, \tau_{\delta}\blockdiag(\xi, \mathds{I}_{n_w}),  \\ &\blockdiag(\tau_{c} \xi, \boldsymbol{0}_{n_w\times n_w}), (T_m K)^\top,  \\ &(K[f_{b}]^{-1}X_r^{-1})^\top, (K[f_{b}]^{-1}X_r^{-1})^\top, \\ &(K[f_{b}]^{-1} X_s^{-1})^\top , (K[f_{b}]^{-1}X_s^{-1})^\top , 4H^\top, \\ &(2\rho \pi  [r]^3[C_Q])^\top \Big),
		\end{split}
	\end{equation}
	where 
	\begin{equation}\label{Lambda}
		\Lambda= \left(\begin{array}{cc}\Upsilon(V)\diag(\cos(\theta)) &\Omega^\top \\
			\Omega &\chi_{d}E(\theta)\end{array}\right),
	\end{equation}
	with $\Omega=(\diag(V))^{-1}\vert \mathcal{A}\vert\Upsilon(V)\diag(\sin(\theta))$. By virtue of Assumption~\ref{ass2},  we have $\Upsilon(V)\diag(\cos(\theta))>0$ for $\theta\in (-\frac{\pi}{2},\frac{\pi}{2})^m$. Then in analogy with \cite[Lemma~2]{ref16} and by using the Schure complement of \eqref{Lambda}, the matrix $\Lambda$ evaluated at $(\bar{\theta}, \bar{\omega}=\boldsymbol{0}, \bar{V}, {P}^\mathrm{opt}, \bar{x}, \bar{\delta})$ is positive definite if and only if
	\begin{equation}\label{Schur}
		\begin{split}
			&\chi_{d}E(\bar{\theta})-\diag(\bar{V})^{-1}\vert D\vert\Upsilon(\bar{V})\diag(\sin(\bar{\theta}))\\
			&\quad \quad \quad \; \; \diag(\cos(\bar{\theta}))^{-1}\diag(\sin(\bar{\theta}))\vert D\vert^\top \diag(\bar{V})^{-1}>0.
		\end{split}
	\end{equation}
	Thus, by virtue of Assumption~\ref{ass2}, it can be inferred from \eqref{eq34}--\eqref{Schur} that the Hessian matrix $\nabla^2 S$ evaluated at $(\bar{\theta}, \bar{\omega}=\boldsymbol{0}, \bar{V}, {P}^\mathrm{opt}, \bar{x}, \bar{\delta})$ is positive definite. Consequently, the storage function $S$ has a local minimum at $(\bar{\theta}, \bar{\omega}=\boldsymbol{0}, \bar{V}, {P}^\mathrm{opt}, \bar{x}, \bar{\delta})$. 
	
	Now, the Ito derivative of the storage function $S$ satisfies
	\begin{equation}\label{eq35}
		\begin{split}
			\mathcal{L}S=&-(\chi_{d}E(\theta)V-\bar{E}_{fd})^\top  \tau_v^{-1}(\chi_{d}E(\theta)V-\bar{E}_{fd})\\
			&-(\omega-\bar{\omega})^\top \psi(\omega-\bar{\omega})-(\delta-P)^\top  \vartheta(\delta-P)\\
			&-(\iota_{qs}-\bar{\iota}_{qs})^\top  R_s (\iota_{qs}-\bar{\iota}_{qs})-(\iota_{ds}-\bar{\iota}_{ds})^\top  R_s \\
			&(\iota_{ds}-\bar{\iota}_{ds})-(\iota_{qr}-\bar{\iota}_{qr})^\top  R_r (\iota_{qr}-\bar{\iota}_{qr}) \\ &-(\iota_{dr}-\bar{\iota}_{dr})^\top  R_r (\iota_{dr}-\bar{\iota}_{dr})-\tilde{v}^\top \rho\pi [r]^3[C_{Q}]\\&(\mu_w-\frac{1}{2}\sigma_w^\top \sigma_w-v-(f_r-\bar{f}_r))\tilde{v} -((f_r-\bar{f}_r)+\tilde{v})^\top\\& \rho\pi [r]^3[C_{Q}]v((f_r-\bar{f}_r)+\tilde{v})
			-\big(Q\delta +Z-(Q\bar{\delta}+\\&Z)\big)^\top L^\mathrm{com}\big(Q\delta +Z-(Q\bar{\delta}+Z)\big)
	-(\delta-\bar{\delta})^\top (\delta-\bar{\delta})
		\end{split}
	\end{equation}
	along the solution to \eqref{eq3}, \eqref{governor}, \eqref{eq21},  \eqref{eq22.1}, \eqref{eq22.2}, \eqref{eq31.}, where $\vartheta=\blockdiag\big(\xi, \mathds{I}_{n_w}\big)$. 
	Then, it follows that $\mathcal{L}S\leq 0$.  Thus, we can conclude that the solutions
	to the closed-loop system \eqref{eq3}, \eqref{governor}, \eqref{eq21},  \eqref{eq22.1}, \eqref{eq22.2}, \eqref{eq31.} are
	bounded.
	Moreover, according to LaSalle's invariance principle, these solutions stochastically converge to the largest invariant set contained in $\Lambda := \{{\theta}, {\omega}, {V}, {P}, {x}, {\delta}:  {\omega}=\boldsymbol{0}_n, \chi_{d}E(\theta)V=\bar{E}_{fd,}  P=\delta, x=\bar{x}, Q\delta+Z=Q\bar{\delta}+Z+d(t)\boldsymbol{1}_n\}$, where $d(t): \mathbb{R}_{\geq 0}\rightarrow\mathbb{R}$. 
	Then, we can obtain $\bar{\delta}+d(t)Q^{-1}\boldsymbol{1}_n={P}^\mathrm{opt}+d(t)Q^{-1}\boldsymbol{1}_n=\delta=P$. Hence, the behavior of the
	power network system \eqref{eq3} on the set $\Lambda$ can be described by
	\begin{equation}\label{eq37}
		\begin{split}
			\dot{\theta}=&~\boldsymbol{0}\\
			\boldsymbol{0}=&~{P}^\mathrm{opt}-\mathcal{A}\Upsilon(\overline{V}')\sin(\overline\theta')+d(t)Q^{-1}\boldsymbol{1}_n-P_l\\
			\boldsymbol{0}=&-\chi_{d}E(\overline\theta')\overline{V}'+\bar{E}_{fd},
		\end{split}
	\end{equation}
	where $\overline{V}'$ and $\overline\theta'$ are constants (possibly different from $\overline{V}$ and $\overline\theta$). 
	Moreover, since $\boldsymbol{1}_n^\top ({P}^\mathrm{opt}-P_l)=0$, $\boldsymbol{1}_n^\top \mathcal{A}=0$, and $Q^{-1}$ is a positive definite diagonal matrix, we can pre-multiply the second equation of \eqref{eq37} by $\boldsymbol{1}_n^\top$  and obtain $d(t)=0$. Thus, we have $\delta=\bar{\delta}$ and can then infer from \eqref{eq17}, \eqref{eq17.} that $\bar{P}={P}^\mathrm{opt}$. Therefore, the solutions to the closed-loop system \eqref{eq3}, \eqref{governor}, \eqref{eq21},  \eqref{eq22.1}, \eqref{eq22.2}, \eqref{eq31.}, starting sufficiently close to $(\bar{\theta}, \bar{\omega}=\boldsymbol{0}, \bar{V}, {P}^\mathrm{opt}, \bar{x}, \bar{\delta})$ stochastically converge to the set where $\bar{\omega}=\boldsymbol{0}$ and $\bar{P}={P}^\mathrm{opt}$ with ${P}^\mathrm{opt}$ given by \eqref{optimal}. 
	%
\end{proof}
\section{Simulation Results}\label{sec:4}
\begin{table}[b]
	\caption{Constant parameters of simulation} 
	\centering 
	\begin{tabular}{c p{1cm} p{1cm} p{1cm} p{1.2cm}} 
		\toprule
		Parameter  & Area 1 &  Area 2 &  Area 3  & Area 4 \\ [0.5ex]
		
		\midrule 
		$B_{ii}$ (p.u.)~         & -56.3       & -58.5      & -56.2      & -49.4 \\
		~$q_i~ (\frac{\$10^4}{h})$       & 5        & 4.5            & 5.5      & 1 \\
		$\tau_{v}$ (s)~ ~~        & 6.32        & 6.63     & 7.15      & 6.46 \\
		$X_{di}$ (p.u.)       & 1.76          & 1.81     & 1.87      & 1.91 \\
		$X'_{di}$ (p.u.)       & 0.27    & 0.17     & 0.23      & 0.35 \\
		$E_{fdi}$(p.u.)       & 3.85         & 4.43     & 3.96      & 3.88 \\ 
		$\tau_{p i}$ (p.u.)~     & 3.95         & 4.71     & 5.23      & 4.17 \\ 
		$\psi_i$ (p.u.)~~     & 1.82         & 1.61     & 1.33      & 1.55 \\ 
		$\tau_{c i}$ (s)~ ~~      & 7.2         & 6.8     & 8.9      & - \\ 
		$\tau_{\delta i}$ (s)~ ~~     & 0.2     & 0.2     & 0.2      & 0.2 \\
		~~~~~$\xi_i$ (Hz p.u.$^{-1}$)   & 0.73     & 0.73     & 0.73      & - \\ 
		$R_{s i}$ (p.u.)~     & -     & -     & -      & 0.031 \\ 
		$R_{r i}$ (p.u.)~     & -     & -     & -      & 0.025 \\
		$X_{s i}$ (p.u.)~     & -     & -     & -      & 3.62 \\
		$X_{r i}$ (p.u.)~     & -     & -     & -      & 3.61 \\
		$X_{m i}$ (p.u.)     & -     & -     & -      & 3.6 \\
		$H_i$ (p.u.)~~~     & -     & -     & -      & 3.2 \\
		$r_i$ (m)~~~~~     & -     & -     & -      & 42 \\
		$\mu_{wi}$ (p.u.)~     & -     & -     & -      & 17.15 \\
		$\sigma_{wi}$ (p.u.)~     & -     & -     & -      & 2.65\\ [1ex] 
		\bottomrule 
	\end{tabular}
	\label{tab2}
\end{table} 
\begin{figure}[t]
	\centering
	\includegraphics[width=\columnwidth]{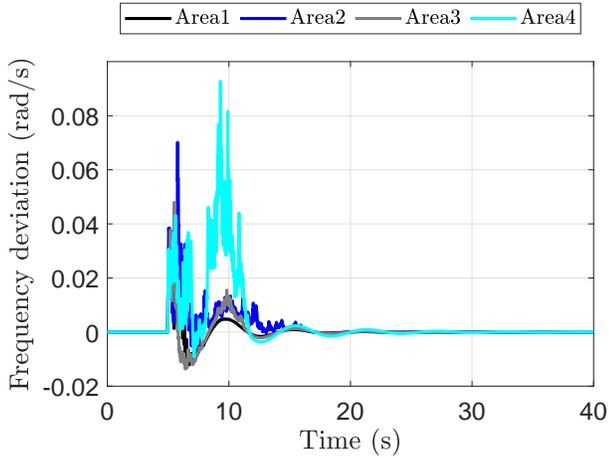}
	\caption{Frequency deviation in each area.}
	\label{f3}
\end{figure}
\begin{figure}[t]
	\centering
	\includegraphics[width=\columnwidth]{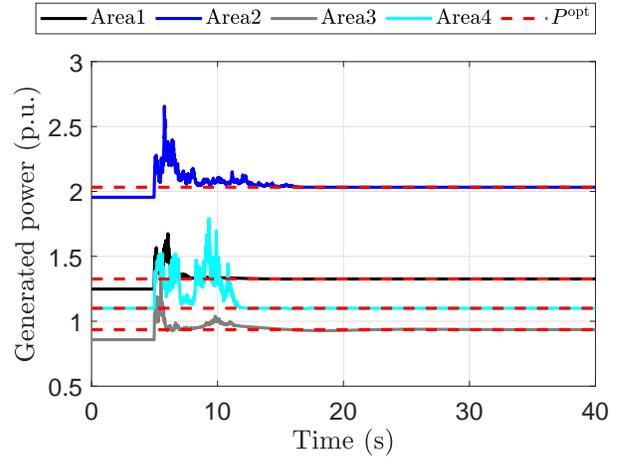}
	\caption{Generated power in each area.}
	\label{f2}
\end{figure}

\begin{figure}[t]
	\centering
	\includegraphics[width=\columnwidth]{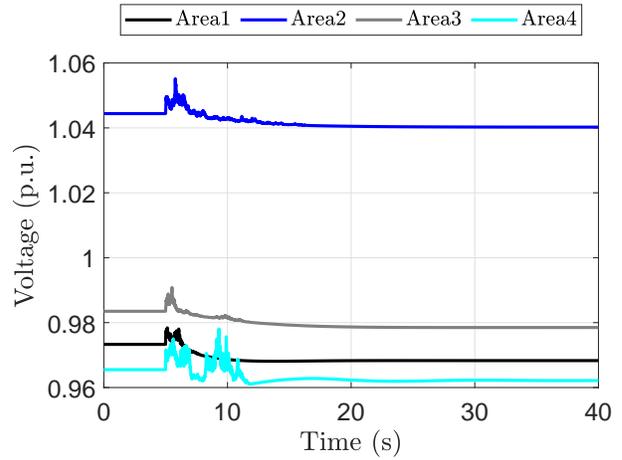}
	\caption{Voltage in each area.}
	\label{f4}
\end{figure}
In this section, the simulation results show excellent performance of the proposed distributed control scheme. We consider a power network partitioned into four control areas that are interconnected as represented in \cite[Fig.~1]{tcst2021}, where areas 1, 2 and 3 include conventional generation, while area 4 includes wind generation. We provide the system parameters
in Table~\ref{tab2}, where the parameters are equal to \cite[Table~I]{ref52} and \cite[Table~II]{tcst2021}, the nominal frequency and power base are
chosen equal to $120\pi$ rad$/$s and $1000$ MVA, respectively.


The system is initially at the steady-state with constant load $P_l=\col(1.3, 2, 1.3, 0.5)$. Then,  at the time instant $t=5$~s the load increases to $P_l=\col(1.4, 2.1, 1.4, 0.55)$ and the wind speed varies according the stochastic differential equation \eqref{eq20}. Fig.~\ref{f3} shows that the frequency deviation in each area converges to zero after a transient time. 
Also, we notice from Fig.~\ref{f2} that after $t=5$~s the generated power in each area converges to the corresponding optimal value (dashed line), which has been computed according to \eqref{optimal} with $P_l=\col(1.4, 2.1, 1.4, 0.55)$.
Specifically, we observe that the additional power demand is supplied by the conventional generators while the wind turbine (Area 4) generates the maximum possible power given a certain wind speed.
Moreover,  we can notice from Fig.~\ref{f4} that the voltages are stable.
\section{Conclusion}\label{sec:5}
In this paper, we have considered a power network including conventional synchronous generators with turbine-governor and wind turbines based on the doubly fed induction generator, where the wind speed is described by a stochastic differential equation. Then, we have verified the (stochastic) passivity of the considered system and present a distributed control scheme that guarantees the stochastic stability of the overall system, achieving optimal load frequency control.




%



\balance


\begin{thebibliography}{99}
	
	\bibitem{ref1}
	J. Machowski, J. Bialek, and D. J. Bumby,  \emph{Power System Dynamics: Stability and Control}, 2nd ed.~ Wiley, 2008.
	\bibitem{ref2}
	A. Wood and B. Wollenberg, \emph{Power Generation, Operation, and Control}, 2nd ed.~ Wiley, 1996.
	\bibitem{ref47}
	S.~Trip, and C.~De~Persis, ``Distributed optimal Load Frequency Control with non-passive dynamics,'' \emph{{IEEE Trans. Control of Network Systems}}, vol.~5, no.~3 pp.~1-1, 2018. 
	\bibitem{ref20}
	D. Apostolopoulou, P. W. Sauer, and A. D. Domnguez-Garca, ``Distributed optimal load frequency control and balancing authority area coordination,''  \emph{{in Proc. of the North American Power Symposium (NAPS)}},  pp. 1-5, 2015.
	\bibitem{cai}
	D.~Cai, E.~Mallada, and A.~Wierman, ``Distributed optimization decomposition for joint economic dispatch and frequency regulation,''  \emph{IEEE Transactions on Power Systems}, vol.~32, no.~6, pp.~4370-4385, 2015.
	\bibitem{ref49}
	S.~Trip, M. Cucuzzella, and C.~De~Persis, A. van der Schaft, and A.~Ferrara, ``Passivity based design of sliding modes for optimal Load Frequency Control,'' \emph{{IEEE Transactions on Control Systems Technology}}, vol.~27, no.~5, pp.~1893-1906, 2019. 
	\bibitem{ref3}
	D.~Apostolopoulou, A.~D.~Domnguez-Garca, and P.~W.~Sauer ,  ``An assessment of the impact of uncertainty on automatic generation control systems,'' \emph{{IEEE Transactions on Power Systems}}, pp. 2657-2665, 2016. 
	\bibitem{ref39}
	J. W. Simpson-Porco, F. Dorfler, and F. Bullo, ``Synchronization and power sharing for droop-controlled inverters in islanded microgrids,''  \emph{{Automatica}}, vol.~49, no.~9, pp.~2603-2611, 2013. 
	\bibitem{ref40}
	J. M. Guerrero, J. C. Vasquez, J. Matas, L. G. de Vicuna, and M. Castilla,``Hierarchical control of droop-controlled ac and dc
	microgrids, a general approach toward standardization,''  \emph{{IEEE Transactions on Industrial Electronics}}, vol. 58, no. 1, pp. 158-172, 2011. 
	\bibitem{ferrara}
	S.~Trip, M.~Cucuzzella, C.~De Persis, A.~Ferrara, and J. M. A. Scherpen,``Robust load frequency control of nonlinear power networks,''  \emph{{International Journal of Control}},  vol.~93, no.~2, 2020.
	\bibitem{ref26}
	C. Zhao, E. Mallada, and F. Dorfler, ``Distributed frequency control for stability and economic dispatch in power networks ,''  \emph{{in Proc. of th 2015 American Control Conference (ACC)}}, pp. 2359-2364, 2015.
	\bibitem{ref46}
	C. Zhao and S. Low, ``Decentralized primary frequency control in power networks,'' \emph{{arXiv:1403.6046 [cs.SY]}}, 2014.  
	\bibitem{ref48}
	F.~Dorfler, and S.~Grammatico, ``Gather-and-broadcast frequency control in power systems,'' \emph{{Automatica}}, vol.~79, pp.~296-305, 2017. 
	\bibitem{ref50}
	T. Stegink, A. Cherukuri, C. De Persis, A. van der Schaft, and J.~Cortes, ``Frequency-driven market mechanisms for optimal dispatch in power networks,'' \emph{{arXiv preprint arXiv:1801.00137 [math.OC] }}, 2017. 
	\bibitem{muhando}
	E.B. Muhando, T. Senjyu, A. Yona, H. Kinjo, and T. Funabashi, ``Regulation of WTG dynamic response to parameter variations of analytic wind stochasticity,'' \emph{{Wind Energy}},  vol.~11, no.~2, pp.~133-150, 2008. 
	\bibitem{verdejo}
	H. Verdejo, A. Awerkin, E. Saavedra, W. Kliemann, and L. Vargas, ``Stochastic modeling to represent wind power generation and demand in electric power system based on real data,'' \emph{{Applied Energy}},  vol.~173, pp.~283-295, 2016. 
	\bibitem{ref51}
	R. Zarate-Minano, and F. Milano, ``Construction of SDE-based wind speed models with exponentially decaying autocorrelation,'' \emph{{Renewable Energy}},  vol.~94, pp.~186-196, 2016.
	\bibitem{ref16}
	S. Trip, M. Burger, and C. De Persis, ``An internal model approach to (optimal) frequency regulation in power grids with time-varying voltages,''  \emph{{Automatica}}, vol. 64, pp. 240-253, 2016. 
	\bibitem{ref17}
	M. Burger, C. De Persis, and S. Trip, ``An internal model approach to (optimal) frequency regulation in power grids''  \emph{{in Proc. of the 21th
			International Symposium on Mathematical Theory of Networks and 	Systems (MTNS)}}, Groningen, the Netherlands, 2014, pp. 577-583.
	
	\bibitem{ref52}
	R. Aghatehrani ; R. Kavasseri, ``Sliding Mode Control Approach for Voltage Regulation in Microgrids with DFIG Based Wind Generations,'' \emph{{IEEE Power and Energy Society General Meeting}}, pp.~1-8, 2011. 
	
	\bibitem{ref52.1}
	M.~Toulabi, S.~Bahrami, and A.~.M. Ranjbar, ``An Input-to-State Stability Approach to Inertial Frequency Response Analysis of Doubly-Fed Induction Generator-Based Wind Turbines,'' \emph{{ IEEE Transactions on Energy Conversion}}, vol.~32, no.~4, pp.~1418-1431, 2017. 
	
	\bibitem{ref25ch2}
	K.~J.~Astrom,   Introduction to stochastic control theory,  \emph{Academic press New York and London}, 1970.
	
	
	
	
	
	
	\bibitem{ref26.1ch2}
	Z.~Wu,  M.~Cui,  X.~Xie,  and P.~Shi,  ``Theory of Stochastic Dissipative Systems,''  \emph{IEEE Transactions on Automatic Control},  vol~56, no.~7, 2011.
	
	
	
	
	
	
	
	
	
	
	
	\bibitem{tcst2021}
	A. Silani, M. Cucuzzella, J. M. A. Scherpen, M. J. Yazdanpanah,  ``Output Regulation for Load Frequency Control,'' \emph{arXiv:2010.12840 [eess.SY]}, 2020.
	
	
	
	
	
	
	
	
	
	
	
	
	%
	%
	%
	%
	%
	%
	%
	%
	
	%
	%
	%
	%
	
	
	
	%
	%
	%
	%
	%
	
\end{thebibliography}
\end{document}